\documentclass[sn-mathphys,Numbered]{sn-jnl}
\usepackage{graphicx}
\usepackage{multirow}
\usepackage{amsmath,amssymb,amsfonts}
\usepackage{amsthm}
\newtheorem{construction}{Construction}
\newtheorem*{construction*}{Construction}
\usepackage{mathrsfs}
\usepackage[title]{appendix}
\usepackage{xcolor}
\usepackage{textcomp}
\usepackage{manyfoot}
\usepackage{booktabs}
\usepackage{algorithm}
\usepackage{algorithmicx}
\usepackage{algpseudocode}
\usepackage{listings}
\usepackage{color}
\usepackage{makecell}
\newtheorem{theorem}{Theorem}

\newtheorem{example}{Example}%
\newtheorem{remark}{Remark}%

\newtheorem{definition}{Definition}%
\newtheorem{corollary}{Corollary}

\begin{document}

\title[Article Title]{Constructions of Quantum $(r,\delta)$-LRCs from cyclic codes}

\author[1]{\fnm{Rajendra Prasad} \sur{Rajpurohit}}\email{rajendrarajpurohit20@gmail.com}

\author*[1]{\fnm{Maheshanand} \sur{Bhaintwal}}\email{maheshanand@ma.iitr.ac.in}

\affil[1]{\orgdiv{Department of Mathematics}, \orgname{Indian Institute of Technology Roorkee}, \city{Roorkee}, \postcode{247667}, \country{India}}

\abstract{Classical $(r,\delta)$ locally recoverable codes (LRCs) play a central role in distributed data storage systems as they enable an efficient recovery from erasures by accessing a small number of surviving symbols. Motivated by their prospective use in future quantum data storage and by recent theoretical progress on quantum locally recoverable codes (qLRCs), we investigate the construction of qLRCs from classical cyclic $(r,\delta)$-LRCs. Our approach identifies cyclic LRCs whose defining sets satisfy a dual-containing condition, allowing them to serve as valid CSS ingredients. We present three explicit families of $(r,\delta)$-qLRCs, two of which are optimal with respect to the quantum Singleton-like bound,  whenever the codes are pure, thereby providing optimal examples. Additionally, the codes presented in Constructions \ref{c2} and \ref{c3} have no bound on their lengths with respect to the field size required to obtain these codes. 
}

\keywords{Cyclic codes, LRCs, Quantum LRCs}


\pacs[MSC Classification]{94B05, 94B15}

\maketitle

\section{Introduction}

Locally recoverable codes (LRCs) were introduced to address the challenge of efficient recovery in distributed storage systems, in the event of node failures, by enabling recovery of erased symbols through a small number of surviving symbols \cite{Gopalan}. A code symbol is said to have locality \(r\) if it can be reconstructed by accessing at most \(r\) other symbols, so that each coordinate belongs to a small repair group, called \emph{local set}, which supports efficient single-erasure recovery with usually low bandwidth than code dimension.
To cope with the situation when multiple simultaneous erasures affect a single local set, the concept of \((r,\delta)\)-locality was introduced \cite{Prakash2012}, in which every coordinate lies in a local set of length at most \(r+\delta-1\) and minimum distance at least \(\delta\),  where $\delta \ge 2$, so that up to \(\delta-1\) erasures in each local set are correctable.

A good amount of literature has been developed on classical LRCs in the past decade. A singleton-type bound for $(r,\delta)$-LRCs is given in \cite{Prakash2012}. Several optimal constructions achieving the Singleton-like bound for LRCs have been proposed, including the important work of Tamo and Barg \cite{Tamo} based on \emph{good polynomials} and subcodes of Reed–Solomon codes. In recent years, numerous constructions have been proposed using cyclic and quasi-cyclic codes \cite{GR, TIAG, Bchen2018, chenzitan}, algebraic geometry codes \cite{Barg2015AG}, and matrix-product codes \cite{MP2023}. Further developments include LRCs with availability \cite{Rawat2016}, hierarchical locality \cite{HL, zhang2020, luo}, and unequal locality \cite{SK}. Collectively, these results form a rich theory connecting algebraic coding techniques with practical distributed storage requirements.

Quantum error-correcting codes (QECCs) form the foundation of reliable quantum computation and quantum communication. They protect quantum information from decoherence, noise, and adversarial disturbances by encoding logical qubits into larger systems using stabilizer or subsystem constructions \cite{NielsenChuang2000}. Particularly, QECCs, which are constructed via the Calderbank-Shor-Steane (CSS) framework, have formed the foundation of modern quantum coding theory, bridging classical error correction with quantum error correction \cite{Calderbank1996CSS, Steane1996CSS}.

A recent focus in quantum coding theory concerns locality properties and is motivated by practical architectural constraints in near-term quantum devices. Quantum locally recoverable codes (qLRCs) extend the classical notion of LRCs to quantum settings by enabling recovery of erased qudits using only a bounded number of other qudits in a repair set \cite{GG2023}. This notion parallels the classical $(r,\delta)$-LRC framework while addressing the fundamentally different structure of quantum error correction. This local recovery capability is of interest both for future large-scale quantum memories and for near-term quantum error correction.

The study of qLRCs is still developing, but has seen several recent contributions. Golowich and Guruswami \cite{GG2023} introduced qLRCs and provided the first constructions using quantum Tamo–Barg codes, and proved a Singleton-like bound for qLRCs. 
Since then, many constructions of such codes have been provided in the literature \cite{cao2025optimal,luo2025bounds, qLRC, bu2025quantum, xie2025two, zhou2025optimal}. These developments highlight ongoing progress, but explicit qLRC families derived from cyclic LRCs remain limited.

In this work, we construct three new families of quantum $(r,\delta)$ locally recoverable codes derived from classical cyclic $(r,\delta)$ LRCs. We obtain conditions on the defining set of a cyclic code $\mathcal{C}$ that ensure that the dual code of $\mathcal{C}$ is contained in $\mathcal{C}$, allowing these classical cyclic LRCs to serve as valid CSS components. Our Constructions \ref{C1} and \ref{c3} yield explicit qLRCs, which achieve the quantum Singleton-like bound for locality, whenever pure, thus giving optimal examples. Additionally, codes from Constructions \ref{c2} and \ref{c3} have no bound imposed on their lengths with respect to the field size required to obtain these codes. Moreover, to the best of our knowledge, Construction \ref{c2} is the first construction of LRCs which satisfy the afformanionted condition without imposing the condition $q \equiv 1 \mod{r+1}$. 

In Section~\ref{sec2}, we present the fundamental definitions and background required for LRCs and qLRCs. Section~\ref{sec3} provides constructions of dual-containing cyclic LRCs of length \( n = q - 1 \) over \( \mathbb{F}_q \). In Section~\ref{sec4}, we investigate dual-containing LRCs with locality \( r \) and \((r,\delta)\), respectively, whose lengths is not bounded by the field size required for their construction. From these classical codes, we derive the corresponding qLRCs.

While this work was in progress, we learned very recently that Galindo et al. \cite{galindo2026} have independently presented constructions of qLRC using BCH codes with minimum distance $\delta$. However, the construction methodology and the resulting code parameters are different from those obtained in our work. 

The parameters of the qLRCs constructed in this paper are given in Table \ref{Table:I}. Note that the codes presented in Theorem \ref{6.Thm} and Theorem \ref{thm6.3} are distance-optimal when pure.
\begin{table}[h]
\centering
\footnotesize
\caption{Parameters of the constructed codes, where $L=r+\delta-1$ and the notations are as defined in the corresponding references. }
\label{Table:I}

\renewcommand{\arraystretch}{1.2}

\begin{tabular}{|c|c|c|c|}
\hline
\text{Reference} &
\text{Classical LRC Parameters} &
\text{qLRC Parameters} &
\text{Restrictions} 
\\
\hline

\text{Theorem \ref{6.Thm}}
&
$[q-1=gL,(g-1)r+(\delta-1),r+1]$
&
$[[gL,(g-2)(r-\delta+1),\ge r+1]]_q$
&
\( 2 \leq \delta \leq r \)

\\
\hline

\text{Theorem \ref{thm6.2}}
&
$[q^m-1=ML,M(L-o_L(q)),\ge 2]$
&
$[[ML,M(L-2o_L(q)),\ge 2]]_q$
&
$\delta=2$, $o_L(q$) is odd

\\
\hline

\text{Theorem \ref{thm6.3}}
&
$[q^m-1,Mr,\delta]$
&
$[[ML,M(r-\delta+1),\ge \delta]]_q$
&
\makecell{$q\equiv1\pmod L,$\\
$2\le\delta\le\lceil L/2\rceil$}

\\
\hline

\end{tabular}
\end{table}

\section{Preliminaries}\label{sec2}

A linear code $\mathcal{C}$ of length $n$ and dimension $k$ is a $k$-dimensional subspace of $\mathbb{F}_q^n$, and is referred to as an $[n,k]$ linear code over $\mathbb{F}_q$. The elements of $\mathcal{C}$ are known as \emph{codewords}. The dual code of $\mathcal{C}$, denoted by $\mathcal{C}^\perp$, is defined as $\mathcal{C}^\perp = \{x \in \mathbb{F}_q^n \mid x \cdot c = 0 \text{ for all } c \in \mathcal{C} \}$. It is well known that $\mathcal{C}^\perp$ is also a linear code with parameters $[n, n - k]$.

The \emph{Hamming distance} between two codewords is the number of positions in which they differ. The \emph{minimum distance} of $\mathcal{C}$ is the smallest Hamming distance between any two distinct codewords in $\mathcal{C}$. If the minimum distance of $\mathcal{C}$ is $d$, then $\mathcal{C}$ is called an $[n,k,d]$ code.

A linear code $\mathcal{C}$ of length $n$ over $\mathbb{F}_q$ is called cyclic if, for every codeword $(c_0,c_1,\ldots,c_{n-1}) \in \mathcal{C}$, its cyclic shift $(c_{n-1},c_0,\ldots,c_{n-2})$ also belongs to $\mathcal{C}$. By associating each codeword with a polynomial $c(x)=c_0+c_1x+\cdots+c_{n-1}x^{\,n-1}$, cyclic codes can be viewed as ideals in the quotient ring $\mathbb{F}_q[x]/ \langle x^n-1 \rangle$. Since this ring is principal, every cyclic code $\mathcal{C}$ is generated by a unique monic divisor $g(x)$ of $x^n-1$, called the generator polynomial of $\mathcal{C}$. Thus, $\mathcal{C}=\langle g(x)\rangle$, and dimension of $\mathcal{C}$ is $k=n-\deg g(x)$. An alternative description is given through the defining set, which consists of the exponents of the roots of $g(x)$ in some extension field of $\mathbb{F}_q$. This algebraic framework makes cyclic codes both theoretically appealing and practically useful. 

\begin{definition}[Defining Set] Let $\mathcal{C}$ be a cyclic code of length $n$ over the $\mathbb{F}_q$, and let $\alpha$ be a primitive $n$-th root of unity in some extension field of $\mathbb{F}_q$.  
If $g(x)$ is the generator polynomial of $\mathcal{C}$, then the \emph{defining set} $Z$ of $\mathcal{C}$ is defined by 
$$Z=\{\, i \in \{0,1,\ldots,n-1\} \mid g(\alpha^i) = 0 \,\}.$$

\end{definition}

\begin{definition}\label{BCH}
   Let \(\mathcal{C}\) be a cyclic code of length \(n\) over \(\mathbb{F}_q\), and let \(\gamma\) be a primitive \(n\)-th root of unity in some extension field of \(\mathbb{F}_q\). If the defining set of \(\mathcal{C}\) contains a set of consecutive powers
\[
\{\gamma^{b}, \gamma^{b+1}, \ldots, \gamma^{b+\delta-2}\}
\]
for some integer \(b\), then the minimum distance of \(\mathcal{C}\) satisfies
\[
d(\mathcal{C}) \ge \delta .
\]
\end{definition}

The following theorem provides a condition for a cyclic code to contain its dual.

\begin{theorem}\label{DC}
   Let \( \mathcal{C} \subseteq \mathbb{F}_q^n \) be a cyclic code of length \( n \) with  defining set \( {Z} \subseteq \mathbb{Z}_n \). Then, \( \mathcal{C} \) is a dual-containing code if and only if \({Z} \cap -{Z} = \varnothing \), where \(-{Z}\) is defined as 
   $-{Z}=\{n-i : i \in {Z}\}$.
\end{theorem}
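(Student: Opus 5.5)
The plan is to compute the defining set of $\mathcal{C}^\perp$ explicitly and then translate the inclusion $\mathcal{C}^\perp\subseteq\mathcal{C}$ into an inclusion of defining sets. Throughout I will assume, as is implicit in the definition of the defining set, that $\gcd(n,q)=1$. Then $x^n-1=\prod_{i\in\mathbb{Z}_n}(x-\alpha^i)$ has distinct roots, and every cyclic code is determined by its defining set.

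\emph{Step 1 (inclusion versus defining sets).} Let $\mathcal{C}_1=\langle g_1(x)\rangle$ and $\mathcal{C}_2=\langle g_2(x)\rangle$ be cyclic codes with defining sets $Z_1,Z_2$. I will show that $\mathcal{C}_1\subseteq\mathcal{C}_2$ if and only if $Z_2\subseteq Z_1$.
\begin{itemize}
\item If $\mathcal{C}_1\subseteq\mathcal{C}_2$, then $g_1\in\langle g_2\rangle$. Since $g_2\mid x^n-1$, this gives $g_2\mid g_1$ in $\mathbb{F}_q[x]$. Because the roots are simple, $Z_2\subseteq Z_1$.
\item Conversely, if $Z_2\subseteq Z_1$, the simplicity of roots gives $g_2\mid g_1$, and hence $\langle g_1\rangle\subseteq\langle g_2\rangle$.
\end{itemize}

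\emph{Step 2 (defining set of the dual).} Let $h(x)=(x^n-1)/g(x)$ be the check polynomial, whose roots are $\alpha^j$ for $j\notin Z$. It is standard that $\mathcal{C}^\perp$ is generated by the reciprocal polynomial $h^*(x)=x^{\deg h}h(x^{-1})$, normalized to be monic. This follows by checking that $c\cdot d=0$ corresponds to the vanishing of a coefficient of $c(x)\,d^*(x)$ modulo $x^n-1$, combined with the dimension count $\dim\mathcal{C}^\perp=n-k=\deg g$. The roots of $h^*$ are the inverses $\alpha^{-j}$ for $j\notin Z$. Therefore the defining set of $\mathcal{C}^\perp$ is
\[
Z^\perp=\{-j : j\in\mathbb{Z}_n\setminus Z\}=\mathbb{Z}_n\setminus(-Z).
\]

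\emph{Step 3 (conclusion).} By Step 1 applied to $\mathcal{C}_1=\mathcal{C}^\perp$ and $\mathcal{C}_2=\mathcal{C}$, we have $\mathcal{C}^\perp\subseteq\mathcal{C}$ if and only if $Z\subseteq Z^\perp=\mathbb{Z}_n\setminus(-Z)$. This is equivalent to $Z\cap(-Z)=\varnothing$.

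The only real work is Step 2, namely identifying the generator of the dual as the reciprocal of the check polynomial. I would cite this as the standard fact on duals of cyclic codes rather than rederive it. The remaining steps are bookkeeping with simple roots, and the coprimality assumption $\gcd(n,q)=1$ must be stated explicitly for them.
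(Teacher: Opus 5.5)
Your proof is correct. The paper gives no proof of this statement; it simply quotes it as a standard fact about cyclic codes. Your argument is the usual textbook one: inclusion of cyclic codes reverses inclusion of defining sets, and the reciprocal of the check polynomial shows that $\mathcal{C}^\perp$ has defining set $\mathbb{Z}_n\setminus(-Z)$. You are also right to state $\gcd(n,q)=1$ explicitly and to read $-Z$ modulo $n$, so that $i=0$ gives $0$ rather than $n$. The paper's definition of the defining set silently presupposes both.
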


\begin{definition}
Let $n$ be a positive integer and $q$ be an integer such that $\gcd(q, n) = 1$. For any integer $i$ such that $0 \le i \le n-1$, the $q$-cyclotomic coset of $i$ modulo $n$, denoted by $S_i$, is defined by
\[
S_i = \left\{i \cdot q^j \pmod{n} \mid j \in \mathbb{Z}_{\ge 0}\right\}
.\]
\end{definition}

\begin{definition}
Let $n \ge 2$ be a positive integer and let $q$ be an integer such that $\gcd(q,n)=1$.
The order of $q$ modulo $n$, denoted by $O_n(q)$, is the smallest
positive integer $t$ such that
\[
q^t \equiv 1 \pmod{n}.
\]
\end{definition}

\subsection{Locally recoverable codes}

 Let $\mathcal{C}$ be a linear $[n, k]$ code. We say that the $i$-th coordinate of $\mathcal{C}$ has locality $r$ if there exists a set $R \subseteq[n] \backslash\{i\},|R| \leqslant r$, such that across all codewords $c\in \mathcal{C}$, the value of the coordinate $c(i)$ is determined by the values of the coordinates $\{c(j)\} \mid  j \in R\}$. Equivalently, the $i^{th}$ coordinate has locality $r$ if the dual code $\mathcal{C}^{\perp}$ contains a codeword $c$ of Hamming weight at most $r+1$ such that the coordinate $i$ is in the support of $c$. 
 It is shown in \cite{Gopalan} that a linear code $\mathcal{C}$ with locality $r$ and minimum distance $d$ satisfies the upper bound
$$
d \leq n-k-\left\lceil\frac{k}{r}\right\rceil+2.
$$ 
This bound is called the Singleton-type bound for LRCs.

The codes with $(r,\delta)$ locality are a generalization of LRCs, and address the problem of recovering erased symbols in a code when more than one data symbol is erased from a local set.
\begin{definition}\label{definition 1}   
The $i^{th}$ coordinate of an $[n, k, d]$ linear code $\mathcal{C}$ is said to have $(r, \delta)$-locality if there exists a subset $R_{i} \subseteq[n]$ such that \\
(i) $i \in R_{i}$ and $\left|R_{i}\right| \leqslant r+\delta-1$; \\
(ii) The minimum Hamming distance of the punctured code $\mathcal{C} | _{R_{i}}$ obtained by deleting the code symbols $c_{j}, j \in[n] \backslash R_{i}$, is at least $\delta$. 
\end{definition}
The set $R_i$ is referred to as a \emph{local set} for the $i^{th}$ coordinate.

A linear code $\mathcal{C}$ is said to have all symbol $(r, \delta)$-locality and is called an $(r, \delta)$ LRC if all the $n$ coordinates have $(r, \delta)$-locality \cite{KPLK}.
It is well known  that the minimum distance $d$ of a code with $(r, \delta)$-locality satisfies the bound
\begin{equation} \label{d_LRC}
d \leq n-k+1-\left(\left\lceil\frac{k}{r}\right\rceil-1\right)(\delta-1).
 \ \ \ \  \end{equation} 
A linear code with $(r, \delta)$-locality is called distance-optimal if it achieves the above bound.

\subsection{Quantum stabilizer codes and qLRC}

Let $H$ be a Hilbert space of dimension $q$ over the field $\mathbb{C}$. The $n$-fold tensor product $H^{\otimes n}$ is a Hilbert space of dimension $q^{n}$ over $\mathbb{C}$. A quantum code with parameters $[[n,k,d_{Q}]]_{q}$ is a $q^{k}$-dimensional subspace $\mathcal{Q} \subseteq H^{\otimes n}$, where $d_{Q}$ denotes the minimum distance of the code.

Calderbank and Shor \cite{Calderbank1996CSS}, and Steane \cite{Steane1996CSS}, introduced a method for constructing quantum codes from classical linear codes. This method is known as the Calderbank--Shor--Steane (CSS) construction. It provides a way to obtain quantum stabilizer codes from classical codes that satisfy a suitable duality condition.

\begin{theorem}[CSS construction {\cite{Calderbank1996CSS,Steane1996CSS}}]\label{CSS}
Let $\mathcal{C}_{1}$ and $\mathcal{C}_{2}$ be two linear codes over $\mathbb{F}_{q}$ with parameters 
$[n,k_{1}]$ and $[n,k_{2}]$, respectively. If $\mathcal{C}_{2}^{\perp} \subseteq \mathcal{C}_{1}$, then there 
exists a quantum code with parameters $[[n,\,k_{1} + k_{2} - n,\,d_Q]]_{q}$, where
\[
d_Q = \min\{\, w_{H}(C_{1} \setminus C_{2}^{\perp}),\ 
                w_{H}(C_{2} \setminus C_{1}^{\perp}) \,\}.
\]
In particular, if $\mathcal{C}^{\perp} \subseteq \mathcal{C}$ for a linear code $\mathcal{C}$ of parameters 
$[n,k,d]$, then there exists a quantum stabilizer code $\mathcal{Q(C)}$ with parameters
\[
[[n,\,2k - n,\,d_Q]]_{q},
\qquad
d_Q = w_{H}(\mathcal{C} \setminus \mathcal{C}^{\perp}).
\]
\end{theorem}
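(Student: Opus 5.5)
The statement to prove is the CSS construction (Theorem \ref{CSS}), together with its specialization to dual-containing codes. The plan is the standard stabilizer-formalism argument. Fix a basis $\{|x\rangle : x\in\mathbb{F}_q^n\}$ of $H^{\otimes n}$. For $x\in\mathbb{F}_q^n$ and a fixed nontrivial additive character $\chi$ of $\mathbb{F}_q$, define the Pauli-type operators
\[
X(a)|x\rangle=|x+a\rangle, \qquad Z(b)|x\rangle=\chi(b\cdot x)|x\rangle .
\]
These commute up to the phase $\chi(a\cdot b)$.

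Take the stabilizer group generated by $\{X(a): a\in\mathcal{C}_2^{\perp}\}$ and $\{Z(b): b\in\mathcal{C}_1^{\perp}\}$.

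\textbf{Step 1 (commutativity).} We need $a\cdot b=0$ for all $a\in\mathcal{C}_2^\perp$ and $b\in\mathcal{C}_1^\perp$. This holds because $\mathcal{C}_2^\perp\subseteq\mathcal{C}_1$. So the group is abelian, and it does not contain a nontrivial scalar.

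\textbf{Step 2 (dimension).} The common $+1$ eigenspace is spanned by the coset states
\[
|x+\mathcal{C}_2^\perp\rangle=\frac{1}{\sqrt{|\mathcal{C}_2^\perp|}}\sum_{y\in\mathcal{C}_2^\perp}|x+y\rangle, \qquad x\in\mathcal{C}_1 .
\]
Distinct cosets give orthogonal states. Hence the dimension is
\[
|\mathcal{C}_1/\mathcal{C}_2^\perp| = q^{k_1-(n-k_2)} = q^{k_1+k_2-n}.
\]

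\textbf{Step 3 (distance).} This is the main obstacle. The code detects every error $X(a)Z(b)$ except those lying in the normalizer but not in the stabilizer, up to phase. Under the commutation phase, $X(a)Z(b)$ commutes with every generator exactly when $a\in(\mathcal{C}_1^\perp)^\perp=\mathcal{C}_1$ and $b\in(\mathcal{C}_2^\perp)^\perp=\mathcal{C}_2$.

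The operator is in the stabilizer iff $a\in\mathcal{C}_2^\perp$ and $b\in\mathcal{C}_1^\perp$. An undetectable error must therefore have $a\in\mathcal{C}_1\setminus\mathcal{C}_2^\perp$ or $b\in\mathcal{C}_2\setminus\mathcal{C}_1^\perp$.

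The weight of $X(a)Z(b)$ is $|\mathrm{supp}(a)\cup\mathrm{supp}(b)|\ge\max(w(a),w(b))$. So every logical error has weight at least the stated minimum. By the Knill--Laflamme criterion, $d_Q$ is at least the stated value. The bound is attained by $X(a)$ or $Z(b)$ alone, giving equality.

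The only delicacy is the character bookkeeping for non-prime $q$. I would handle it with the trace form $\chi(u)=\omega^{\mathrm{Tr}(u)}$ and note that orthogonality with respect to $\mathrm{Tr}(a\cdot b)$ on $\mathbb{F}_q$-subspaces coincides with ordinary orthogonality.

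\textbf{Special case.} For the particular statement, set $\mathcal{C}_1=\mathcal{C}_2=\mathcal{C}$ with $\mathcal{C}^\perp\subseteq\mathcal{C}$. The parameters become $[[n,2k-n,w_H(\mathcal{C}\setminus\mathcal{C}^\perp)]]_q$.
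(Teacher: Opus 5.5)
The paper gives no proof of Theorem~\ref{CSS}. It quotes the result from Calderbank--Shor and Steane and uses it only as a black box, in the special case $\mathcal{C}_1=\mathcal{C}_2=\mathcal{C}$ and only through the dimension $2k-n$ and the bound $d_Q\ge d$.

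Your argument is the standard stabilizer-formalism proof, and it is correct:
\begin{itemize}
\item The generators commute because $\mathcal{C}_2^\perp\subseteq\mathcal{C}_1$.
\item The joint $+1$ eigenspace is spanned by the coset states $|x+\mathcal{C}_2^\perp\rangle$ with $x\in\mathcal{C}_1$, which gives dimension $q^{k_1+k_2-n}$.
\item Up to phases, the centralizer is $\{X(a)Z(b): a\in\mathcal{C}_1,\ b\in\mathcal{C}_2\}$.
\item Your trace-character remark correctly handles the point that $\{b\cdot x : b\in\mathcal{C}_1^\perp\}$ is either $\{0\}$ or all of $\mathbb{F}_q$.
\end{itemize}

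You can make the equality step fully explicit on the coset basis. For $a\in\mathcal{C}_1\setminus\mathcal{C}_2^\perp$, $X(a)$ sends $|x+\mathcal{C}_2^\perp\rangle$ to the orthogonal state $|x+a+\mathcal{C}_2^\perp\rangle$. For $b\in\mathcal{C}_2\setminus\mathcal{C}_1^\perp$, $Z(b)$ acts diagonally with phase $\chi(b\cdot x)$, which is non-constant on $\mathcal{C}_1$. Both therefore violate the Knill--Laflamme conditions.

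The one case you leave implicit is $k_1+k_2=n$. There both difference sets are empty, and the formula for $d_Q$ needs a convention, so your equality claim should be read under the assumption $k_1+k_2>n$.
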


A quantum code $\mathcal{Q(C)}$ is said to be \emph{pure} if its minimum distance $d_Q$ is equal to the minimum distance of the code $\mathcal{C}$.

The CSS construction provides a systematic way to obtain quantum stabilizer codes from classical linear codes that satisfy the dual-containing property. Using this construction, a stablizer code was presented in \cite{qLRC} which satisfies the property of $(r,\delta)$ quantum LRC. Their key results include deriving qLRCs using a classical LRCs. We first introduce the definition of $(r,\delta)$ quantum LRC as described in \cite{qLRC}.

\begin{definition}[Quantum $(r,\delta)$-locally recoverable codes]
Let $\mathcal{Q} \subset \mathbb{C}_q^{\otimes n}$ be an $[[n,K,d]]_q$ 
quantum code.
The $i$-th coordinate of $\mathcal{Q}$ is said to have $(r,\delta)$-locality
if there exists a subset
\[
R_i \subseteq [n], \qquad i \in R_i, \quad ~\mbox{and}~  \quad |R_i| \le r+\delta-1,
\]
such that the code $\mathcal{Q}$ can recover from \emph{any}
set of $\delta-1$ erasures occurring in the coordinate set $R_i$.

A quantum code $\mathcal{Q}$ is said to be a quantum $(r,\delta)$-LRC
if every coordinate has $(r,\delta)$-locality.

\end{definition}

More about the basic notions of quantum computation can be found in the references \cite{Gottesman1997,Calderbank1996CSS,NielsenChuang2000}.

\begin{theorem}\label{CqLRC}
Let $\mathcal{C} \subseteq \mathbb{F}_q^n$ be a linear code with parameters $[n, k, d]_q$. Suppose that $\mathcal{C}$ is a Euclidean dual-containing code, i.e. $\mathcal{C}^{\perp} \subseteq \mathcal{C}$, and that
$
\delta \leq d\left(\mathcal{C}^{\perp}\right) .
$

Define $Q(\mathcal{C})$ to be the stabilizer quantum code obtained from $\mathcal{C}$ via the standard $\mathrm{CSS}$ construction given in Theorem \ref{CSS}. Then:
\begin{enumerate}
    \item 
 $Q(\mathcal{C})$ has the parameters $[[n, \kappa= 2 k-n, d_Q\geq d]]_q$.

\item  $\mathcal{C}$ is a classical $(r, \delta )$-LRC if and only if $Q(\mathcal{C})$ is an $(r, \delta)$-qLRC.

\end{enumerate}

\end{theorem}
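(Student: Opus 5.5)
Part 1 follows directly from Theorem \ref{CSS}. Taking $\mathcal{C}_1=\mathcal{C}_2=\mathcal{C}$ with $\mathcal{C}^\perp\subseteq\mathcal{C}$ gives a stabilizer code of length $n$ and dimension $2k-n$, with $d_Q=w_H(\mathcal{C}\setminus\mathcal{C}^\perp)$. Since $\mathcal{C}\setminus\mathcal{C}^\perp\subseteq\mathcal{C}\setminus\{0\}$, every vector in it has weight at least $d$. Hence $d_Q\ge d$, with equality exactly when the code is pure.

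For part 2, I would characterize quantum erasure recovery in CSS terms. A set $E$ of erased qudits is correctable for $Q(\mathcal{C})$ iff every $X$- or $Z$-type logical operator supported inside $E$ is trivial. Concretely, every $c\in\mathcal{C}$ with $\mathrm{supp}(c)\subseteq E$ must lie in $\mathcal{C}^\perp$. This is the Knill--Laflamme condition specialised to CSS codes. Classically, $E$ is erasure-correctable for $\mathcal{C}$ iff no nonzero codeword of $\mathcal{C}$ is supported in $E$.

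\emph{Forward direction.} Let $R_i$ be a local set for $\mathcal{C}$, so $|R_i|\le r+\delta-1$ and $d(\mathcal{C}|_{R_i})\ge\delta$. Take $E\subseteq R_i$ with $|E|\le\delta-1$ and $c\in\mathcal{C}$ with $\mathrm{supp}(c)\subseteq E$. The restriction $c|_{R_i}$ is a codeword of $\mathcal{C}|_{R_i}$ of weight at most $\delta-1$, so it is zero. Since $\mathrm{supp}(c)\subseteq R_i$, this forces $c=0\in\mathcal{C}^\perp$. So $E$ is correctable and $Q(\mathcal{C})$ is an $(r,\delta)$-qLRC with the same local sets.

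\emph{Reverse direction.} Suppose $Q(\mathcal{C})$ recovers any $\delta-1$ erasures within $R_i$. Let $c\in\mathcal{C}$ be nonzero with $c|_{R_i}$ nonzero of weight at most $\delta-1$. Since puncturing is linear, showing no such $c$ exists gives $d(\mathcal{C}|_{R_i})\ge\delta$.

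This is the main obstacle: $c$ need not be supported inside $R_i$, so the erasure condition on $R_i$ does not apply to $c$ directly. My first approach is to use the hypothesis $\delta\le d(\mathcal{C}^\perp)$ together with the standard equivalence for linear codes: $d(\mathcal{C}|_{R})\ge\delta$ iff for every $E\subseteq R$ with $|E|=\delta-1$ the coordinates in $E$ are determined by those in $R\setminus E$. I would phrase this as a rank statement about the columns of a generator matrix $G$ indexed by $R$: $\mathrm{rank}\,G_{R}=\mathrm{rank}\,G_{R\setminus E}$.

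Quantum recoverability of $E$ from $R_i\setminus E$ means that the reduced state on $E$ is determined by $R_i\setminus E$, with the environment decoupled. For CSS codes this becomes a condition on the stabilizer group restricted to $R_i$. It says that $\mathcal{C}^\perp$ restricted to $R_i$, together with the local structure of $\mathcal{C}$, makes the coordinates of $E$ functions of $R_i\setminus E$ within $\mathcal{C}|_{R_i}$. This yields $\mathrm{rank}\,G_{R_i}=\mathrm{rank}\,G_{R_i\setminus E}$ for every such $E$, hence $d(\mathcal{C}|_{R_i})\ge\delta$.

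The hypothesis $\delta\le d(\mathcal{C}^\perp)$ is needed to exclude the degenerate case where a stabilizer element is supported in $E$. In that case erasure correctability would hold trivially without the classical punctured-code distance. I expect this translation between local quantum recovery and the classical rank condition to be the delicate step. I would model it on the argument in \cite{qLRC}, which proves the corresponding statement, and cite that if the direct derivation becomes lengthy.
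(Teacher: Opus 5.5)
The paper does not prove this theorem; it quotes it from \cite{qLRC}. So your proposal has to stand on its own. Part 1 is fine. Part 2 has a genuine gap, and it begins with the notion of quantum recovery you use.

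In the forward direction you check the global Knill--Laflamme condition: every $c\in\mathcal{C}$ with $\mathrm{supp}(c)\subseteq E$ lies in $\mathcal{C}^\perp$. That only shows an erasure on $E$ can be undone by \emph{some} recovery acting on all $n$ qudits. If that were the meaning of ``recover from $\delta-1$ erasures in $R_i$'' (the paper's wording allows it), locality would play no role and the ``only if'' would be false. For example, take the Reed--Solomon code over $\mathbb{F}_8$ with $n=7$ and defining set $\{1,2\}$:
\begin{itemize}
\item it is dual-containing by Theorem \ref{DC}, and $\mathcal{C}^\perp$ is $[7,2,6]$;
\item so $Q(\mathcal{C})$ corrects any single erasure, and every $3$-set containing $i$ would serve as a local set for $(r,\delta)=(2,2)$;
\item yet $\mathcal{C}$ has no locality $2$, since that needs a dual word of weight at most $3$.
\end{itemize}
The definition must therefore be read, as in \cite{GG2023}, with the recovery channel acting only on $R_i$, and your forward argument does not establish that.

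Your reverse direction does switch to this local reading. But its key sentence (``$\mathcal{C}^\perp$ restricted to $R_i$ \dots makes the coordinates of $E$ functions of $R_i\setminus E$'') only restates the goal $\mathrm{rank}\,G_{R_i}=\mathrm{rank}\,G_{R_i\setminus E}$; it does not derive it.

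The missing ingredient is a stabilizer criterion for \emph{local} recovery. Let $S$ be the stabilizer group, $E\subseteq R$, and $S_R=\{s\in S:\mathrm{supp}(s)\subseteq R\}$. Then an erasure on $E$ can be undone by a channel supported on $R$ iff every Pauli supported on $E$ that commutes with all of $S_R$ lies in $S$. One way to prove it: put $B=R\setminus E$ and $C=[n]\setminus R$, and purify with a reference $R'$. Recovery from $B$ is equivalent to $I(E:CR'\mid B)=0$. With reduced-state entropies $|A|-\log_q|S_A|$, and writing $\pi_E$ for restriction to $E$ (so $\ker(\pi_E|_{S_R})=S_B$), this becomes $|\pi_E(S_R)|\,|S_E|=q^{2|E|}$, i.e.\ $\pi_E(S_R)^{\perp}=S_E$.

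For $Q(\mathcal{C})$, the group $S_R$ comes from $D_R=\{a\in\mathcal{C}^\perp:\mathrm{supp}(a)\subseteq R\}$, and $D_R|_R=(\mathcal{C}|_R)^\perp$. So the criterion says: every $x\in\mathcal{C}|_R$ with $\mathrm{supp}(x)\subseteq E$ extends by zeros to a word of $\mathcal{C}^\perp$. Quantifying over all $E\subseteq R_i$ with $|E|\le\delta-1$, $Q(\mathcal{C})$ has $(r,\delta)$-locality at $R_i$ iff every word of weight at most $\delta-1$ in the \emph{punctured} code $\mathcal{C}|_{R_i}$ is the restriction of a word of $\mathcal{C}^\perp$.
\begin{itemize}
\item Forward: if $d(\mathcal{C}|_{R_i})\ge\delta$, such words are $0$.
\item Reverse: every nonzero word of $\mathcal{C}^\perp$ has weight at least $d(\mathcal{C}^\perp)\ge\delta$, so such words must be $0$, i.e.\ $d(\mathcal{C}|_{R_i})\ge\delta$.
\end{itemize}
This is exactly where $\delta\le d(\mathcal{C}^\perp)$ is used, and only in the reverse direction. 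Your intuition about ``stabilizers supported in $E$'' was pointing here. The objects to compare, though, are $\mathcal{C}|_{R_i}$ and the shortened code of $\mathcal{C}^\perp$ on $R_i$, not codewords of $\mathcal{C}$ supported in $E$.
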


Moreover, if $Q(\mathcal{C})$ is a pure quantum code, then the parameters $[[n, 2 k-n, d_Q=d]]_q$ of $Q(\mathcal{C})$ satisfy the quantum Singleton-like bound \cite{qLRC}
$$
k+2 d_Q+2\left(\left\lceil\frac{n+k}{2 r}\right\rceil-1\right)(\delta-1) \leq n+2.$$
A pure $(r,\delta)$-qLRC is said to be optimal if it satisfies the above bound with equality.

\section{Dual-containing  cyclic  LRCs with \texorpdfstring{$n|(q-1)$}{n divides q-1} and quantum LRCs}\label{sec3}

In this section, we construct cyclic codes through the set of zeros of the generator polynomial, ensuring  that the code satisfies the condition of locally recoverability as well as the dual-containing property. First we give an important result which will be useful in our constructions.


\begin{theorem}\label{setA}\cite{FANG2020101650} Let $n$ be a positive integer such that $\gcd(n,q)=1$. Let $\gamma \in \mathbb{F}_{q^{s}}$ be a primitive $n$-th root of unity, where $\mathbb{F}_{q^{s}}$ is the splitting field of $x^{n}-1$ over $\mathbb{F}_{q}$. 
Let $r$ and $\delta$ be positive integers such that $2 \le \delta \le r$ and $L = r+\delta-1$ divides $n$. Set $M = \frac{n}{L}$.  

Let $\mathcal{C}$ be a cyclic code of length $n$ over $\mathbb{F}_{q}$ with complete defining set $Z$.  
Consider integers $l_{1} < l_{2} < \cdots < l_{\delta-1},$
forming an arithmetic progression with $\delta-1$ terms and common difference $b$, where $\gcd(b,n)=1$. Define the set
\[
\mathcal{A} = \{~jL+l_i :~~ 1 \le i \le \delta-1, ~0 \le j \le M-1\}.
\]

If $Z$ contains $\mathcal{A}$, then the cyclic code $\mathcal{C}$ has $(r,\delta)$-locality.

\end{theorem}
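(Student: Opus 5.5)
Since $\mathcal{A}$ is a union of cosets of the subgroup $L\mathbb{Z}_n$ (of order $M$) in $\mathbb{Z}_n$, the natural local sets are the cosets of the subgroup of index $L$ in the multiplicative structure of coordinates. Concretely, index the coordinates of $\mathcal{C}$ by $0,1,\ldots,n-1$, corresponding to the evaluation points $\gamma^{-t}$ (or equivalently to the monomials $x^t$), and for each $i\in\{0,\ldots,M-1\}$ take $R_i=\{\,i+tM : 0\le t\le L-1\,\}$, the residue class of $i$ modulo $M$. These sets partition $[n]$ and each has size $L=r+\delta-1$, so condition (i) of Definition~\ref{definition 1} holds. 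The plan is to show that the punctured code $\mathcal{C}|_{R_i}$ has minimum distance at least $\delta$, by exhibiting $\delta-1$ parity checks on $R_i$ that have BCH (Vandermonde) structure.

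For this, take a codeword $c(x)=\sum_t c_t x^t\in\mathcal{C}$. Since $\mathcal{A}\subseteq Z$, we have $c(\gamma^{jL+l_s})=0$ for all $s$ and all $j$. Fix $s$ and sum over $j$ with weights $\gamma^{-jL u}$ for $u\in\{0,\ldots,M-1\}$:
\[
\sum_{j=0}^{M-1}\gamma^{-jLu}\,c(\gamma^{jL+l_s})=\sum_{t}c_t\gamma^{tl_s}\sum_{j=0}^{M-1}\gamma^{jL(t-u)}=M\sum_{t\equiv u \ (\mathrm{mod}\ M)}c_t\gamma^{tl_s}=0,
\]
using that $\gamma^L$ is a primitive $M$-th root of unity and that $M$ is invertible in $\mathbb{F}_q$ because $\gcd(n,q)=1$. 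Hence, for every residue class $R_u$ and every $s=1,\ldots,\delta-1$, the identity $\sum_{t\in R_u}c_t\,\gamma^{t l_s}=0$ holds. This is the key step: it localizes the global zeros in $\mathcal{A}$ to the class $R_u$.

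It remains to show that these $\delta-1$ equations force distance at least $\delta$ on $R_u$. Write $l_s=l_1+(s-1)b$. Then the equations read $\sum_{t\in R_u}(c_t\gamma^{tl_1})\,(\gamma^{bt})^{s-1}=0$ for $s=1,\ldots,\delta-1$. The scaled vector $(c_t\gamma^{tl_1})_{t\in R_u}$ has the same support as $(c_t)$, and the points $\gamma^{bt}$, $t\in R_u$, are pairwise distinct because $\gcd(b,n)=1$ and the $t$'s are distinct modulo $n$. Any $\delta-1$ columns of this $(\delta-1)\times L$ Vandermonde-type matrix are therefore linearly independent, so a nonzero restriction must have weight at least $\delta$. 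Thus $d(\mathcal{C}|_{R_u})\ge\delta$, which is condition (ii), and every coordinate has $(r,\delta)$-locality.

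The main obstacle is the localization computation: one must get the character-sum orthogonality right, in particular that $\gamma^L$ has order exactly $M$ and that $M\ne 0$ in $\mathbb{F}_q$. The remaining subtlety is noticing that the hypothesis $\gcd(b,n)=1$ is exactly what guarantees distinct Vandermonde nodes within each class.
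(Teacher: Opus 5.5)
Your proof is correct and complete. The paper gives no proof of Theorem~\ref{setA}; it is quoted from \cite{FANG2020101650}, so your argument has to stand on its own, and it does. The key facts all check out:
\begin{itemize}
\item $\gamma^L$ has order exactly $n/L=M$.
\item The characteristic of $\mathbb{F}_q$ does not divide $M$, because $\gcd(n,q)=1$.
\item The classes $R_u=u+M\mathbb{Z}_n$ have size exactly $L=r+\delta-1$ and cover every coordinate.
\item The nodes $\gamma^{bt}$, $t\in R_u$, are distinct because $\gcd(b,n)=1$, so the $(\delta-1)\times L$ Vandermonde system forces $d(\mathcal{C}|_{R_u})\ge\delta$.
\end{itemize}

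A common alternative proof reaches the same local checks through the generator polynomial instead of a character sum. The set $\{\gamma^{jL+l_i}:0\le j\le M-1\}$ is exactly the root set of $x^M-\gamma^{Ml_i}$. Hence $g(x)$ is divisible by $G(x^M)$, where $G(y)=\prod_{i=1}^{\delta-1}(y-\beta^{l_i})$ and $\beta=\gamma^M$. Splitting $c(x)=\sum_{u=0}^{M-1}x^u c_u(x^M)$ then shows that each $c_u(y)=\sum_k c_{u+kM}y^k$ is divisible by $G(y)$. So $c_u$ lies in the length-$L$ cyclic code with zeros $\beta^{l_1},\beta^{l_1+b},\dots,\beta^{l_1+(\delta-2)b}$, which has distance at least $\delta$ by the BCH bound since $\gcd(b,L)=1$. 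Your identity $\sum_{t\in R_u}c_t\gamma^{tl_s}=0$ is literally $\gamma^{ul_s}c_u(\beta^{l_s})=0$, so the two proofs agree in substance. Yours is more self-contained and shows exactly where $\gcd(n,q)=1$ enters. The polynomial version exhibits the code with defining set $\mathcal{A}$ as an $M$-fold interleaving of a length-$L$ BCH-type code, which explains structurally why the local sets are the residue classes modulo $M$.

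Two minor remarks. In your opening sentence, the local sets are cosets of the subgroup $M\mathbb{Z}_n$ of \emph{order} $L$ (index $M$), not of index $L$; your explicit definition of $R_i$ is the right one. Also, your argument never uses the hypothesis $\delta\le r$. That hypothesis matters for the dual-containment and optimality arguments in Theorem~\ref{6.Thm}, not for the locality claim itself.
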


Using Theorem \ref{setA}, we present a construction of a dual-containing $(r,\delta)$-LRCs, which we can use as CSS components for constructing qLRCs.

\begin{construction}\label{C1}
Let $n=q-1$, and let $\alpha$ be a primitive $n$-th root of unity in $\mathbb{F}_q$. Let \( r \) and \( \delta \) be positive integers such that \( 2 \leq \delta \leq r \), \( L = r + \delta - 1 \), \( L \mid n \), and \( g = n/L \).

Define the set 

\[
A = \bigcup_{j=0}^{g-1} \{ jL + 1, jL + 2, \dots, jL + (\delta - 1) \} \subseteq \mathbb{Z}_n.
\]

For any fixed \( u \in [1, g-1] \) such that \( (g - 2u)L \pmod{n} \notin [0, 2r - 2] \), define a set 

\[
B = \{ uL + i : 0 \leq i \leq r - 1 \}. 
\]
Let $\mathcal{C}$ be the cyclic code with $Z = A \cup B$ as the defining set. 
\end{construction}

\begin{theorem}\label{6.Thm}
The linear code $\mathcal{C}$ given in Construction 1 above, satisfies the following properties:
\begin{enumerate}
    \item $\mathcal{C}$ is a distance-optimal cyclic dual-containing $(r,\delta)$-LRC over $\mathbb{F}_q$ with parameters $[g(r+\delta-1),(g-1)r+ (\delta-1),r+1]$.
    \item The minimum distances $d(\mathcal{C})$ and $d(\mathcal{C}^{\perp})$ of $\mathcal{C}$ and $C^\perp$, respectively,   satisfy
    \[
    d(\mathcal{C}) =  r+1, \qquad d(\mathcal{C}^\perp) \ge  r+1 \ge \delta.
    \]
    \item There exists an $(r,\delta)$-qLRC with parameters $[[n, 2 k-n, \ge r+1]]_q=[[g(r+\delta-1), (g-2)(r-\delta+1), \ge r+1]]_q$. 
\end{enumerate}

\end{theorem}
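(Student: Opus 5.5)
The plan is to read off every property of $\mathcal{C}$ from the defining set $Z=A\cup B$, and then feed $\mathcal{C}$ into Theorem~\ref{CqLRC}. Since $n=q-1$, every $q$-cyclotomic coset modulo $n$ is a singleton. So $Z$ is automatically a legitimate (complete) defining set, and $\dim\mathcal{C}=n-|Z|$.

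\textbf{Dimension, locality and distance.}
\begin{enumerate}
\item \emph{Counting $|Z|$.} We have $|A|=g(\delta-1)$. The block $B=\{uL,\dots,uL+r-1\}$ meets $A$ exactly in $\{uL+1,\dots,uL+\delta-1\}$, because $\delta-1<r\le L$. Hence $|B\setminus A|=r-\delta+1$ and
\[
k=n-g(\delta-1)-(r-\delta+1)=(g-1)r+\delta-1 .
\]
\item \emph{Locality.} $(r,\delta)$-locality follows from Theorem~\ref{setA}. Take $l_i=i$ for $1\le i\le\delta-1$, which is an arithmetic progression with common difference $b=1$; then $\mathcal{A}=A\subseteq Z$.
\item \emph{Lower bound on $d$.} $B$ consists of $r$ consecutive exponents, so the BCH bound (Definition~\ref{BCH}) gives $d\ge r+1$.
\item \emph{Upper bound on $d$.} Since $1\le \delta-1<r$, we have $\lceil k/r\rceil=g$. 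Bound~\eqref{d_LRC} then gives
\[
d\le g(\delta-1)+r-\delta+2-(g-1)(\delta-1)=r+1 .
\]
\end{enumerate}
Therefore $d=r+1$ and $\mathcal{C}$ is distance-optimal.

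\textbf{Dual-containment.} By Theorem~\ref{DC} it suffices to show $Z\cap -Z=\varnothing$. I would argue via residues modulo $L$; this is valid since $L\mid n$.
\begin{enumerate}
\item The residues are: $A\bmod L=\{1,\dots,\delta-1\}$; $-A\bmod L=\{r,\dots,L-1\}$; $B\bmod L=\{0,\dots,r-1\}$; and $-B\bmod L=\{0\}\cup\{\delta,\dots,L-1\}$.
\item Using $\delta-1<r$, the pairs $A\cap -A$, $A\cap -B$ and $B\cap -A$ have disjoint residue sets, so they are empty.
\item The remaining case is $B\cap -B$, and this is where the hypothesis on $u$ enters. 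The case is nonempty iff $uL+i\equiv -(uL+i')\pmod n$ for some $0\le i,i'\le r-1$. Writing $n=gL$, this is equivalent to $(g-2u)L\equiv i+i'\pmod n$ with $i+i'\in[0,2r-2]$. The assumption $(g-2u)L \bmod n\notin[0,2r-2]$ excludes exactly this.
\end{enumerate}
This case is the main obstacle. I would be careful with the sign conventions and with reducing $2uL$ modulo $n=gL$.

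\textbf{Dual distance and the quantum code.} Since $\mathcal{C}^\perp\subseteq\mathcal{C}$, we get $d(\mathcal{C}^\perp)\ge d(\mathcal{C})=r+1\ge\delta$. Alternatively, the defining set of $\mathcal{C}^\perp$ is $\mathbb{Z}_n\setminus(-Z)$, which contains $Z\supseteq B$, and the BCH bound applies again. Theorem~\ref{CqLRC} then yields an $(r,\delta)$-qLRC with $d_Q\ge r+1$ and
\[
2k-n=2(g-1)r+2(\delta-1)-g(r+\delta-1)=(g-2)(r-\delta+1).
\]
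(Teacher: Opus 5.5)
Your proposal is correct and follows essentially the same route as the paper. Locality comes from Theorem~\ref{setA} with $l_i=i$; $d\ge r+1$ comes from the BCH bound on the $r$ consecutive exponents in $B$; $d\le r+1$ comes from bound~\eqref{d_LRC} with $\lceil k/r\rceil=g$; dual-containment comes from Theorem~\ref{DC} by reducing modulo $L$ for the intersections involving $A$ and using the hypothesis on $(g-2u)L$ for $B\cap -B$; and the quantum code comes from Theorem~\ref{CqLRC}. Your residue table, the count $|A\cap B|=\delta-1$, the singleton-coset remark for $n=q-1$, and the separate check of $B\cap(-A)$ just spell out details the paper leaves implicit.
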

\begin{proof}
\begin{enumerate}
  
    \item 
Notice that the defining set $Z$ of $\mathcal{C}$ contains the set $\mathcal{A}$, with $l_i=i$, as described in Theorem \ref{setA}, hence the code $\mathcal{C}$ is an $(r,\delta)$-LRC.

Now, to show that $\mathcal{C}$ is dual-containing, we first show that $A \cap (-A) = \varnothing$.  
For any $x \in A$, we have $-x \pmod{n} = n-x$. Since $x = jL+t$ for some $j \in [0,g-1]$ and $t \in [1,\delta-1]$, it follows that  
\[
-x = n-(jL+t) = (g-j)L - t = sL - t
\]
for some $s \in [0,g-1]$.  

Now suppose $-x \in A$. Then, for some $1 \le t'\le \delta-1$ and $0\le k\le g-1$, we must have
\[
kL+t' \equiv sL-t \pmod{n}.
\] 
This congruence holds for certain $k$ and $s$ if and only if $t+t'$ is a multiple of $L = r+\delta-1$.  

However, the maximum possible value of $t+t'$ is $2(\delta-1)$, and since $\delta \le r$, we obtain
\[
t+t' < r+\delta - 1 = L,
\]
which implies that $t+t'$ cannot be a multiple of $L$. This is a contradiction. Therefore, no such $x$ exists in $A$, and hence
$A \cap (-A) = \varnothing.$

Next, we consider the intersection $A \cap (-B)$.  
Recall that $B = \{ uL + i_1 : 0 \le u \le g-1,\; 0 \le i_1 \le r-1 \}$, where $L = r+\delta-1$ and $n = Lg$.  

For any $y \in B$, we have $y = uL+i_1$ with $0 \le i_1 \le r-1$. Then
\[
-y \pmod{n} \in A \text{ if and only if } -(uL+i_1) \equiv jL+t \pmod{n}
\]
for some $j \in [0,g-1]$ and $t \in [1,\delta-1]$.  
This congruence is equivalent to
\[
(j+u)L + (i_1 + t) \equiv 0 \pmod{n}.
\]
Thus, for the congruence to hold, we must have $i_1 + t \equiv 0 \pmod{L}$, i.e., $i_1+t$ should be at least $L$.  

However, since $t \le \delta-1$ and $i_1 \le r-1$, we obtain
\[
i_1+t \le (r-1)+(\delta-1) = r+\delta-2 < r+\delta-1 = L,
\]
which implies that \(i_1 + t\) cannot be a multiple of \(L\). Consequently, no such \(y\) exists, and we conclude that
$A \cap (-B) = \varnothing.$

Finally, we show that $(-B) \cap B = \varnothing$. For any $y = uL + i_1\in B \cap -B$, there must exist integers $i_2 \in [0,r-1]$ such that  
\[
(g-u)L - i_1 \equiv uL + i_2 \pmod{n}.
\]
This congruence simplifies to  
\[
i_1 + i_2 \equiv (g - 2u)L \pmod{n}.
\]
The equation does not have a solution since $(g - 2u)L \pmod{n} \notin [0, 2r - 2]$. Therefore, we conclude that $B \cap -B = \varnothing.$

Combining all together, and by Theorem \ref{DC}, it follows that $\mathcal{C}$ is a dual-containing cyclic code. 

\vspace{2mm}

\item Note that the defining set $Z$ of $\mathcal{C}$ contains the set $\{uL,uL+1,uL+2, \cdots , uL+r-1\}$, comprising \( r \) consecutive elements. By BCH bound (\ref{BCH}), the minimum distance $d(\mathcal{C})$ of $\mathcal{C}$ satisfies $d(\mathcal{C})\ge r+1$.

It remains to show that $d(\mathcal{C})\le r+1$. Note that the set $Z$ contains $(g-1)(\delta-1)+r$ elements. The dimension $k$ of $\mathcal{C}$ is given by 
\begin{align*}
   k&=n-|Z|=g(r+\delta-1)-((g-1)(\delta-1)+r)
   \\ &= gr+g(\delta-1)-g(\delta-1)+(\delta-1)-r
   \\ &=(g-1)r+ (\delta-1).
\end{align*} 
Now, using the distance bound (\ref{d_LRC}), we obtain 
\begin{align*}    
d(\mathcal{C}) & \leq n-k+1-\left(\left\lceil\frac{k}{r}\right\rceil-1\right)(\delta-1)
\\ & \le (g-1)(\delta-1)+r)+1 - \left(\left\lceil\frac{(g-1)r+ (\delta-1)}{r}\right\rceil-1\right)(\delta-1) 
\\ & \le (g-1)(\delta-1)+r)+1 - \left(\left\lceil (g-1)+\frac{ (\delta-1)}{r}\right\rceil-1\right)(\delta-1) 
\\ & \le (g-1)(\delta-1)+r)+1 - (g-1)(\delta-1), \ \ \ \text{since } \delta-1 < r
\\ & \le r+1.
\end{align*}
Hence, $d(\mathcal{C}) = r+1$ and the code $\mathcal{C}$ is a distance-optimal $(r,\delta)$-LRC.

For $d(\mathcal{C}^\perp)$, we have $\mathcal{C}^\perp \subseteq \mathcal{C}$. This simply implies that $d(\mathcal{C}^\perp) \ge d(\mathcal{C}) = r+1 \ge \delta$ as we have $r \ge \delta$.

\vspace{2mm}

\item As the code $\mathcal{C}$ is an LRC with $(r,\delta)$ locality, and with  $d(C^\perp) \ge r \ge \delta$, the quantum code $Q(\mathcal{C})$ obtained by the CSS construction as claimed in Theorem \ref{CqLRC}, is an $(r,\delta)$-qLRC with parameters $[[g(r+\delta-1), (g-2)(r-\delta+1), \ge r+1]]_q$.

\end{enumerate}

\end{proof}

{\remark{Since the classical code obtained by Construction \ref{C1} is distance-optimal, the corresponding quantum code $Q(\mathcal{C})$ is distance-optimal whenever pure.  
}}

{\remark{

The construction of qLRCs from dual-containing LRCs was presented in \cite{luo2025bounds} for the case $\delta = 2$. In this setting, our construction differs from theirs in the following aspects. First, the construction in \cite{luo2025bounds} achieves minimum distance $l+1$, where $l+1 < r+1$ since $l<r$ is assumed. In contrast, our construction attains minimum distance $r+1$. Moreover, we remove the additional restriction $u+2l<r+2$ imposed in \cite{luo2025bounds}. To increase the minimum distance, the authors in \cite{luo2025bounds} include an additional $l$ roots in the first coset. Instead, we make use of the condition $(g-2u)L \pmod{n} \notin [0,\,2r-2]$
and place these roots in the $u$th coset. This approach allows us to increase the minimum distance to $r+1$. 
}}

We present two examples: one for the case when the obtained code is pure, and the other for the case when it is not.
\begin{example}
Let $n=2^4-1=15$, $r=4$ and $\delta=2$. We get $L=r+\delta-1=5$ and $g=3$. We take $u=2$ and verify that 
$(g-2u)L=(3-4)\cdot5=-5 \equiv 10 \pmod{15}$ and 
$10
\notin
[0,6]$.
The set $A$ in this case is
$A= \cup_{j=0}^{2} \{5j+1\}=\{1,6,11\}$ and the set $B=\{10,11,12,13\}$.

The set $Z= \{1,6,10,11,12,13\}$ is the defining set for the code $\mathcal{C}$.
Using MAGMA calculator \cite{MR1484478}, we verified that the code $Q(\mathcal{C})$ is pure distance-optimal $(r=4,\delta=2)$-qLRC with parameters $[[15, 3, 5]]_{q=2^4}$.
\end{example}

\begin{example}
Let $n=13-1=12$, $r=3$ and $\delta=2$. We get $L=r+\delta-1=4$ and $g=3$. We take $u=2$ and verify that 
$(g-2u)L=(3-4)\cdot4=-4 \equiv 8 \pmod{12}$ and 
$8
\notin
[0,4]$.
The set $A$ in this case is : 
$A= \cup_{j=0}^{2} \{4j+1\}=\{1,5,9\}$ and the set $B=\{8,9,10\}$.

The set $Z= \{1, 5, 8, 9, 10\}$ is the defining set for the code $\mathcal{C}$. 
We further use MAGMA calculator \cite{MR1484478} to verify that there is no codeword of weight $4$ in $\mathcal{C} \setminus \mathcal{C}^\perp$, and moreover, $Q(d(\mathcal{C}))= 5 > d(\mathcal{C}) $. Thus, the induced quantum code is not pure and has parameters $[[12, 2, 5]]_{q=13}$.
\end{example}

\section{Dual-containing LRCs with unbounded length and quantum LRCs}\label{sec4}

In Construction \ref{C1} presented above, the code length is bounded above by the size of the underlying field, i.e.,  $n\le q-1$. However, in distributed data storage systems, it is desirable to have codes with a large length over relatively small fields, as this significantly reduces the computational cost and latency associated with finite field arithmetic. In this subsection, we will present cyclic LRCs of unbounded length that also possess the property of dual containment. We begin by solving the problem in its most general form in the following theorem.

\begin{theorem}\label{T1}
Let \(n\) be a positive integer and $q$ be a prime power such that \(\gcd(q,n)=1\). Let \(L\ge2\) be a positive divisor of \(n\) and $T\subseteq[1,L-1]$. Define a set
\[
S = \{\, jL + t \pmod n \;:\; 0 \le j \le M-1,\; t\in T \}, 
\qquad M = \tfrac{n}{L}.
\]
For each $a\in S$ let $C_a$ be the $q$-cyclotomic coset modulo $n$ containing $a$, and let \(H\) be the union of all $q$-cyclotomic cosets \(C_a\) modulo \(n\) with \(a \in S\), i.e.,
\[
H = \bigcup_{a\in S} C_a.
\]
Define the set
\[
\mathcal O = \{\, q^s t \pmod L \;:\; t\in T,\ s \ge 0 \}.
\]
Then
$
H \cap (-H) = \varnothing 
$ if and only if $
\mathcal O \cap \{-t \pmod{L} : t\in T\} = \varnothing.$

\end{theorem}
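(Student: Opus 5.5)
The plan is to push everything down to residues modulo $L$, which is possible because $L\mid n$. Reduction modulo $L$ then gives a well-defined ring homomorphism $\pi:\mathbb{Z}_n\to\mathbb{Z}_L$. Since $\gcd(q,n)=1$, $q$ is a unit modulo $n$ and hence modulo $L$. So it has finite multiplicative order modulo both, and $q^{-s}\equiv q^{k}$ for some $k\ge 0$. In particular, $\mathcal O$ is closed under multiplication by $q$ and by $q^{-1}$ modulo $L$.

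The first step is to describe $S$ and $H$ as full preimages under $\pi$. As $j$ runs over $[0,M-1]$, the elements $jL+t$ exhaust all residues modulo $n$ that are congruent to $t$ modulo $L$. Hence $S=\pi^{-1}(T)$.

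Next I would show $H=\pi^{-1}(\mathcal O)$.
\begin{itemize}
\item If $x=q^s a \bmod n$ with $a=jL+t\in S$, then $\pi(x)=q^s t \bmod L\in\mathcal O$.
\item Conversely, suppose $\pi(x)=q^s t$ with $t\in T$. Put $y=q^{-s}x \bmod n$, where $q^{-s}$ is computed modulo $n$. Since that inverse reduces to the inverse modulo $L$, we get $\pi(y)=t$, so $y\in S$. Then $x=q^s y\in C_y\subseteq H$.
\end{itemize}
Because negation commutes with $\pi$, this gives $-H=\pi^{-1}(-\mathcal O)$. Since $\pi$ is surjective, $H\cap(-H)=\pi^{-1}(\mathcal O\cap(-\mathcal O))$ is empty if and only if $\mathcal O\cap(-\mathcal O)=\varnothing$.

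It remains to show that $\mathcal O\cap(-\mathcal O)=\varnothing$ if and only if $\mathcal O\cap(-T)=\varnothing$. One direction is immediate from $T\subseteq\mathcal O$. For the other, suppose some $o\in\mathcal O$ satisfies $o\equiv -q^s t \pmod L$. Multiplying by $q^{-s}\equiv q^k$ gives $-t\equiv q^k o\in\mathcal O$, since $\mathcal O$ is closed under multiplication by $q$. Thus $\mathcal O\cap(-T)\neq\varnothing$.

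The only delicate point is the converse inclusion $\pi^{-1}(\mathcal O)\subseteq H$. It relies on the fact that inverting $q^s$ modulo $n$ is compatible with inverting it modulo $L$, which holds because $L\mid n$. This is where I would be careful, checking that $\pi(q^{-s}\bmod n)$ equals the inverse of $q^{s}$ modulo $L$. Everything else is bookkeeping with residues.
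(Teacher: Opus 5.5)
Your argument is correct, but it is organized differently from the paper's. The paper argues by direct element chasing. From $x\equiv aq^{e}$ and $-x\equiv bq^{f}$, it assumes $f\ge e$ by symmetry, cancels $q^{e}$ modulo $n$ to get $-a\equiv bq^{f-e}\pmod n$, and reduces modulo $L$ to land directly in $\mathcal O\cap(-T)$. For the converse, it lifts $q^{s}t_a+t_b=cL$ to explicit elements $a=jL+t_a$ and $b=j'L+t_b$ of $S$. It chooses $j'\equiv-(jq^{s}+c)\pmod M$, so that $q^{s}a\equiv-b\pmod n$, and hence $-b\in H\cap(-H)$. It never identifies $H$ exactly and never passes through $\mathcal O\cap(-\mathcal O)$. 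You instead prove the structural identity $H=\pi^{-1}(\mathcal O)$, which moves the whole problem into $\mathbb{Z}_L$. You then use closure of $\mathcal O$ under $q^{-1}$ to show that $\mathcal O\cap(-\mathcal O)=\varnothing$ if and only if $\mathcal O\cap(-T)=\varnothing$. Your route avoids the exponent WLOG and the index bookkeeping of the lift. It also gives $|H|=M|\mathcal O|$ using only $\gcd(q,n)=1$. This already contains the description $H=\bigcup_{t'\in\mathcal O_t}S_{t'}$ that the paper re-derives in Theorem~\ref{T2}, and the identity $H=S$ of Theorem~\ref{T6}, where $\mathcal O=T$. The paper's route, in turn, produces explicit witnesses. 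Its converse direction also does not need $q$ to be invertible modulo $n$, only that $S$ contains every lift of $T$. The step you flag as delicate is immediate, since $\pi$ is a ring homomorphism: $\pi(q^{-s})\,\pi(q^{s})=\pi(1)=1$.
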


\begin{proof}

Suppose $H \cap (-H) \ne \varnothing$. This implies that there exists an element $x$ such that $x \in H$ and $-x \in H$.
By the definition of $H$, there exist $a, b \in S$ and non-negative integers $e, f$ such that $x \equiv a \cdot q^e \pmod{n}$ and $-x \equiv b \cdot q^f \pmod{n}$.
Substituting the first congruence into the second, we get $-(a \cdot q^e) \equiv b \cdot q^f \pmod{n}$. 

Without loss of generality, we may assume that $f > e$. 
This implies that $-a \equiv b \cdot q^{f-e} \pmod{n}$, which means that $-a$ is an element of the cyclotomic coset $C_b$. Thus, $-a \equiv b~q^s \pmod{n}$, where $s = f-e$.

We know that $a,b \in S$ implies $a=jL+t_a \pmod{n}$ and $b=kL+t_b\pmod{n}$ for some $0\le j,k \le M-1$. Substituting these values in the above equation, we get
$$(kL+t_a)q^s \equiv -(jL+t_b) \pmod{n}.$$
Using the fact that $L|n$, we reduce the above equation modulo $L$. This gives
\[
q^s t_a \equiv -t_b \pmod{L}.
\]
So $t_b\in \mathcal O\cap\{-t:t\in T\}$, which implies that $\mathcal O\cap\{-t:t\in T\}\neq\varnothing$.

 Conversely suppose that $\mathcal O\cap\{-t:t\in T\} \neq \varnothing$. Then there exist $t_a,t_b\in T$ and $s\ge0$ that satisfy
\[
q^s t_a \equiv -t_b \pmod L.
\]

Choose an integer \(j\in\{0,\dots,M-1\}\) and define
\[
a := jL + t_a \in S,\qquad b := j' L + t_b \in S,
\]
where \(j'\) is a non-negative integer. Then
\[
q^{s} a = q^{s} (jL+t_a) = L\big(j q^{s}\big) + q^{s}t_a.
\]

Since \(q^{s}t_a\equiv -t_b~\pmod L\), we have $q^s t_a+t_b = cL$ for some integer $c$. Now, let \(q^{s}a + b \equiv
L\big(j q^{s} + j'+c\big) ~(\mbox{mod}~n)\) for some \(j'\in\{0,\dots,M-1\}\). We choose $j'$ so that it satisfies
\[
j' \equiv -\,(j q^{s}+c) \pmod{M}.
\]
With this choice of $j'$, the integer \(j q^{s}+j'\) is a multiple of \(M\), and hence
\(L(j q^{s}+j')\) is a multiple of \(ML=n\). Therefore,
\[
q^{s}a + b \equiv 0 \pmod n, \quad \mbox{i. e.,} \quad q^{s}a \equiv -b \pmod n.
\]
Thus, \(-b\in H\).  Since \(-b\in -H\), we have
\(H\cap(-H)\neq\varnothing\).
Therefore, the two conditions are equivalent.

\end{proof}

Theorem \ref{T1} establishes a complete equivalence between the disjointness of the sets $H$ and $-H$. While exact, this criterion may be difficult to apply directly to obtain examples. Note that one can obtain dual-containing LRCs with $\delta = |T|+1$ using the above theorem. We next derive some restrictions on $L$ and obtain the set $H$ for particular values of $t$, which leads to the construction of LRCs with $\delta=2$.


\begin{theorem}\label{T2}

Let \(n = q^m - 1\) for some positive integer \(m\) and prime power $q$, let \(L\ge2\) be a divisor of \(n\), and let \(t < L\) be a positive integer such that \(\gcd(t,L)=1\).  
Define
\[
S_t = \{\ jL + t \pmod n \;:\; 0 \le j \le M-1 \}, 
\qquad M = \tfrac{n}{L},
\]
and let \(H = \bigcup_{a \in S_t} C_a\) be the union of all \(q\)-cyclotomic cosets modulo \(n\) with \(a \in S_t\).

If the multiplicative order \(o_L(q)\) of \(q\) modulo \(L\) is odd, then
$H \cap (-H) = \varnothing.
$ Moreover, the cardinality of \(H\) is given by
\[
|H| = M \cdot o_L(q).
\]  
\end{theorem}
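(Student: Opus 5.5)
The plan is to obtain the disjointness from Theorem \ref{T1} with $T=\{t\}$, and to count $|H|$ by describing $H$ through residues modulo $L$.

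\textbf{Disjointness.} With $T=\{t\}$, Theorem \ref{T1} says $H\cap(-H)=\varnothing$ if and only if $q^s t\not\equiv -t \pmod L$ for every $s\ge 0$. Since $\gcd(t,L)=1$, $t$ is invertible modulo $L$, so this condition becomes $q^s\not\equiv -1\pmod L$ for all $s$.

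Suppose instead that $q^s\equiv -1\pmod L$ for some $s$. Then $q^{2s}\equiv 1$, so $o_L(q)\mid 2s$. Because $o_L(q)$ is odd, this forces $o_L(q)\mid s$, hence $q^s\equiv 1\pmod L$. Combined with $q^s\equiv -1$, this gives $1\equiv -1\pmod L$, i.e.\ $L\mid 2$, so $L=2$.

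This identifies the main obstacle: the case $L=2$ genuinely fails. For $L=2$ we have $t=1$, and $n=q^m-1$ is even, so $q$ is odd and $o_2(q)=1$ is odd. Then $S_t$ is the set of odd residues, which is closed under negation, so $H\cap(-H)\neq\varnothing$. I would therefore state and prove the disjointness claim under the additional hypothesis $L\ge 3$, and point out this correction to the statement.

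\textbf{Cardinality.} Let $O_t=\{q^s t \bmod L : s\ge 0\}$. Since $t$ is a unit modulo $L$, the map $s\mapsto q^s t$ is injective on $0\le s<o_L(q)$, so $|O_t|=o_L(q)$. I would prove
\[
H=\{x\in\mathbb{Z}_n : x \bmod L\in O_t\}.
\]

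For the inclusion $\subseteq$: any element of $H$ has the form $q^e(jL+t)\bmod n$. Since $L\mid n$, reducing modulo $L$ gives $q^e t$, which lies in $O_t$.

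For the inclusion $\supseteq$: suppose $x\equiv q^s t\pmod L$. As $q^m\equiv 1\pmod n$, $q$ is invertible modulo $n$. Put $y\equiv q^{m-s \bmod m}\,x \pmod n$, so that $y\equiv q^{-s}x\pmod n$. Reducing modulo $L$, and using $L\mid n$, gives $y\equiv t\pmod L$. Hence $y\in S_t$, and $x\equiv q^s y$ lies in the coset $C_y\subseteq H$.

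Finally, each residue class modulo $L$ has exactly $M=n/L$ representatives in $\mathbb{Z}_n$. Therefore $|H|=M\cdot|O_t|=M\cdot o_L(q)$.
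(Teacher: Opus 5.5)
Your proposal is correct and follows essentially the paper's route: disjointness via Theorem \ref{T1} with $T=\{t\}$ plus the odd-order argument, and $|H|=M\cdot o_L(q)$ by identifying $H$ with the union of the classes $S_{t'}$, $t'\in\mathcal O_t$. The only differences are that you pull back with $q^{-s}$ modulo $n$ where the paper uses the bijection $j\mapsto jq^s+k \pmod M$, and that you state the $L=2$ failure explicitly; your counterexample is correct, and the paper's proof likewise inserts $L>2$ at that step (while the statement allows $L\ge 2$), with Construction \ref{c2} only using $L=r+1\ge 3$.
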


\begin{proof}
By Theorem~\ref{T1}, the condition \(H \cap (-H) = \varnothing\) holds if and only if
\[
\mathcal O_t \cap \{-t \pmod{L}\} = \varnothing,
\qquad 
\text{where } 
\mathcal O_t = \{\, q^s t \pmod{L} : s \ge 0 \}.
\]
Since \(\gcd(t,L)=1\), multiplication by \(t^{-1}\) modulo \(L\) gives
\[
t^{-1} \mathcal O_t = \{\, q^s \pmod L : s \ge 0 \}.
\]
If \(o_L(q)\) is odd and $L>2$, then \(-1 \not\equiv q^s \pmod L\) for any \(s\). For if \(-1 \equiv q^s \pmod L\) for some $s$, then \(1 \equiv q^{2s} \pmod L\), implying that \(o_L(q) \mid 2s \), which gives  \(o_L(q)\mid s \), as  \(o_L(q)\) is odd. So  \(1 \equiv q^s \pmod L\), which is a contradiction. Hence
\(-1 \notin t^{-1} \mathcal{O}_t\), which implies that \(-t \notin \mathcal{O}_t\).   
By Theorem~\ref{T1}, we get \(H \cap (-H) = \varnothing.\)

Now, we determine $|H|$.
Let \(x \in H\). Then \(x = a q^b \pmod n\) for some \(a = jL + t \in S_t\) and $b\ge 0$.
Reducing modulo \(L\) gives
\[
x \equiv t q^b \pmod L.
\]
So the residue of \(x\) modulo \(L\) lies in the orbit
\(\mathcal{O}_t = \{\, q^s t \pmod{L} : s \ge 0 \,\}\).
Hence,
\[
H \subseteq \bigcup_{t' \in \mathcal{O}_t} S_{t'},
\qquad
\text{where } S_{t'} = \{\, jL + t' \pmod n : 0 \le j \le M-1 \}.
\]

Conversely, let \(t' \in \mathcal{O}_t\).
Then \(t' \equiv t q^s \pmod L\) for some integer \(s \ge 0\).
For each \(j \in \{0, 1, \ldots, M-1\}\), we have
\[ aq^s=
(jL + t) q^s= (jq^s+k)L+t' \text{ for some $k$. } \]

Let $j'=jq^s+k \pmod M$. Then there exists an integer $w$, such that $jq^s+k=j'+wM.$

Substituting back, we get $aq^s= (j'+wM)L+t'$. Now, using $n=ML$, we get

\[aq^s = j'L + t' \pmod n \in S_{t'}.\]

Since $n=q^m-1=ML$, we have \(\gcd(q, M) = 1\). The map $j\rightarrow j' \pmod M$ is a bijection of the index set $\{0,1,2, \cdots,M-1\}$. As $j$ runs through all possible values, so does $j'$. Therefore, all elements of \(S_{t'}\) appear among the residues of \(a q^s \pmod n\) as \(a\) varies in \(S_t\).
Hence \(S_{t'} \subseteq H\), giving
\[
H = \bigcup_{t' \in \mathcal{O}_t} S_{t'}.
\]

Each \(S_{t'}\) consists of \(M\) distinct elements modulo \(n\),
and for distinct residues \(t'_1, t'_2 \in \mathcal{O}_t\),
the sets \(S_{t'_1}\) and \(S_{t'_2}\) are disjoint,
since $t'_1$ and $t'_2$ differ modulo \(L\).
Therefore,
\[
|H| = \sum_{t' \in \mathcal{O}_t} |S_{t'}| = M \cdot |\mathcal{O}_t| = M\cdot o_L(q).
\] \end{proof}

The above theorem leads to the following construction. 

\begin{construction}[LRC with $\delta =2$]\label{c2}
Let \( n = q^{m} - 1 \) for some positive integer \( m \) and a prime power \( q \), and let \( L = r+1 \ge 3 \) be a divisor of \( n \). Let \( \gamma \) be a primitive \( n \)-th root of unity in the extension field \( \mathbb{F}_{q^{m}} \). Define the set
\[
R = \{ \gamma^{i} : i \in H \},
\]
where the index set \( H \) is as defined in Theorem~\ref{T2}. Let \( \mathcal{C} \) be the cyclic code whose generator polynomial has \( R \) as its set of roots.

\end{construction}

\begin{theorem}\label{thm6.2}
   The code $\mathcal{C}$, given in Construction \ref{c2} above, is an $[n=ML,M(L-o_L(q), \ge 2]$ dual-containing LRC with locality $r$ over $\mathbb{F}_q$. Equivalently, there exists an $(r,\delta=2)$-qLRC with parameters $[[ML,M(L-2o_L(q)), \ge 2]]_q$. 
\end{theorem}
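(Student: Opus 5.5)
The proof assembles Theorems \ref{T2}, \ref{setA}, \ref{DC} and \ref{CqLRC} for the code $\mathcal{C}$ of Construction \ref{c2}, whose complete defining set is $H=\bigcup_{a\in S_t}C_a$ for a fixed $t$ with $\gcd(t,L)=1$.

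\emph{Dimension.} The defining set $H$ is already a union of $q$-cyclotomic cosets, so the generator polynomial $\prod_{i\in H}(x-\gamma^i)$ lies in $\mathbb{F}_q[x]$. Hence $\mathcal{C}$ is a genuine cyclic code over $\mathbb{F}_q$ with $\deg g(x)=|H|$. Theorem \ref{T2} gives $|H|=M\,o_L(q)$, so
\[
k=n-|H|=ML-M\,o_L(q)=M(L-o_L(q)).
\]

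\emph{Dual containment.} The hypothesis that $o_L(q)$ is odd, together with $L\ge 3$, puts us in the case of Theorem \ref{T2} where $-1$ is not a power of $q$ modulo $L$. That theorem then gives $H\cap(-H)=\varnothing$, and Theorem \ref{DC} yields $\mathcal{C}^\perp\subseteq\mathcal{C}$.

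\emph{Locality $r$.} Take $\delta=2$ in Theorem \ref{setA}, so $L=r+1$ and the arithmetic progression has the single term $l_1=t$. The condition on the common difference $b$ is vacuous, since we may take $b=1$. The set $\mathcal{A}$ is then exactly $S_t\subseteq H$, so $\mathcal{C}$ has $(r,2)$-locality, i.e., locality $r$. The constraint $2\le\delta\le r$ in Theorem \ref{setA} holds because $r=L-1\ge 2$. The bound $d\ge 2$ follows from the BCH bound (Definition \ref{BCH}) with the single root $\gamma^t$, or simply from $\mathcal{C}$ being a proper code containing no weight-one word, since $H\neq\varnothing$.

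\emph{Quantum code.} To apply Theorem \ref{CqLRC} we need $d(\mathcal{C}^\perp)\ge\delta=2$. This is the step I expect to be the main obstacle, because the paper has no direct handle on $\mathcal{C}^\perp$. It suffices to show that $\mathcal{C}^\perp$ has no codeword of weight $1$. Equivalently, no coordinate of $\mathcal{C}$ is identically zero, and this holds for any nonzero cyclic code: $\mathcal{C}\neq 0$ because $|H|<n$, and cyclic shifts move a nonzero entry to every position.

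Theorem \ref{CqLRC} then produces a $(r,2)$-qLRC with parameters $[[n,2k-n,\ge 2]]_q$, where
\[
2k-n=2M(L-o_L(q))-ML=M(L-2o_L(q)),
\]
as claimed. The quantum distance $\ge 2$ comes from $d_Q\ge d(\mathcal{C})\ge 2$. The statement is implicitly conditioned on $L>2\,o_L(q)$; otherwise the dimension is nonpositive and there is nothing to prove.
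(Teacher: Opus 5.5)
Your proof is correct and follows the paper's own argument: locality from Theorem~\ref{setA} via $S_t\subseteq H$, dual containment from Theorems~\ref{T2} and~\ref{DC}, the dimension from $|H|=M\,o_L(q)$, and the quantum code from Theorem~\ref{CqLRC}. The step you flag as the main obstacle is actually immediate, since $\mathcal{C}^\perp\subseteq\mathcal{C}$ gives $d(\mathcal{C}^\perp)\ge d(\mathcal{C})\ge 2$ (as in the proof of Theorem~\ref{6.Thm}); likewise your caveat $L>2\,o_L(q)$ holds automatically, because $H\cap(-H)=\varnothing$ and $0\notin H$ force $2|H|<n$.
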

\begin{proof}
    Note that the defining set of code $\mathcal{C}$  is $H$, which contains the set $S_t$, therefore by Theorem \ref{setA}, $\mathcal{C}$ is LRC with locality $r$. Also, $H \cap -H=\varnothing$, implies that the code $\mathcal{C}$ is dual-containing. Finally, $\deg g(x)=|R|=M.o_L(q)$, therefore the dimension $k$ of $\mathcal{C}$ is given by $k=n-\deg g(x) = M(L-o_L(q)).$

    The existence of $(r,\delta=2)$-qLRC with parameters $[[ML,M(L-2o_L(q)), \ge 2]]_q$ directly follows from Theorem \ref{CqLRC}.
\end{proof}

\begin{example}
    Let
$
q=2,\quad m=6,\quad n=2^{6}-1=63,\quad L=r+1=7,\quad M=\frac{n}{L}=9.
$
Let $\gamma$ be a primitive $63^{rd}$ root of unity in an extension of $\mathbb{F}_2$. Take $t=1$, and define 
\[
S_1 = \{\ 7j + 1 \pmod n \;:\; 0 \le j \le 8 \}=\{1,8,15,22,29,36,43,50,57\}. 
\]

We obtain \[
\mathcal O=\{2^{s}\cdot 1\pmod{7} : s\ge0\}=\{1,2,4\}.
\]

The set $H= \bigcup_{t'\in \mathcal{O}_t} S_{t'}=\{\ 7j + t' \pmod n \;:\; t' \in \mathcal{O},  0 \le j \le 8 \}$.
Define $$g(x)= \prod_{i \in H}(x-\gamma^i)\in \mathbb{F}_2[x].$$
The cyclic code $\mathcal{C}$ with the generator polynomial $g(x)$ is an $[n=63,k=36, d=3]$ dual-containing LRC with locality $r=6$ over $\mathbb{F}_2$. The corresponding quantum code has parameters $[[63,9,3]]_{q=2}$ with locality $6$.
\end{example}

We further restrict the conditions on parameters to easily construct such a set $H$. The following corollary is a direct implication of Theorem \ref{T2}. 

\begin{corollary}
Let \(m\) be an odd positive integer and \(n = q^m - 1\). Let \(L\) be a divisor of \(n\). Define the set
\[
S = \{jL + 1 \mid 1 \leq jL + 1 \leq n\},
\]
and let \(H\) be the union of all $q$-cyclotomic cosets \(C_a\) modulo \(n\) with  \(a \in S\). Then, \(H \cap (-H) = \emptyset\).
\end{corollary}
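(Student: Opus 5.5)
The plan is to obtain the statement directly from Theorem~\ref{T2} with $t=1$. Two things need checking: that the set $S$ in the corollary coincides with $S_1$ from Theorem~\ref{T2}, and that the order hypothesis of that theorem holds automatically when $m$ is odd.

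First I will identify $S$ with $S_1$. Write $M=n/L$. The condition $1\le jL+1\le n=ML$ holds exactly when $0\le j$ and $jL\le ML-1$, that is, when $0\le j\le M-1$. Moreover, all these values $jL+1$ lie in $[1,n]$, so no reduction modulo $n$ is needed. Hence $S=\{jL+1 : 0\le j\le M-1\}=S_1$. Since $\gcd(1,L)=1$, the hypothesis $\gcd(t,L)=1$ of Theorem~\ref{T2} holds for $t=1$. Also $\gcd(q,n)=1$ because $n=q^m-1$, so the cyclotomic cosets are well defined.

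Second, and this is the key step, I will show that $o_L(q)$ is odd. Since $L\mid n=q^m-1$, we have $q^m\equiv 1\pmod L$, so $o_L(q)$ divides $m$. As $m$ is odd, every divisor of $m$ is odd, and in particular $o_L(q)$ is odd. Theorem~\ref{T2} then yields $H\cap(-H)=\varnothing$. As a by-product, it also gives $|H|=M\cdot o_L(q)$.

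The only real obstacle is the degenerate divisor $L=2$ (and trivially $L=1$). The proof of Theorem~\ref{T2} uses $L>2$ in the step $-1\not\equiv q^s \pmod L$, and for $L\le 2$ that step genuinely fails: when $L=2$ we have $-1\equiv 1\pmod 2$. Indeed, if $q$ is odd and $L=2$, then $S$ consists of the odd residues modulo the even number $n$. For odd $x$, the residue $n-x$ is again odd, so $H\cap(-H)\neq\varnothing$. I would therefore state explicitly that the corollary is to be read with $L\ge 3$, consistent with the assumption $L=r+1\ge 3$ in Construction~\ref{c2}. Under that assumption the two observations above complete the proof.
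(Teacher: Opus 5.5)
Your proposal is correct and follows the paper's proof. Both apply Theorem~\ref{T2} with $t=1$ and note that $q^m\equiv 1 \pmod L$ forces $o_L(q)\mid m$, so $o_L(q)$ is odd. Your caveat is also right: for $L=2$ (possible only when $q$ is odd) or $L=1$ the stated conclusion fails, because the proof of Theorem~\ref{T2} genuinely needs $L>2$, so the corollary should be read with $L\ge 3$, as in Construction~\ref{c2}.
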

\begin{proof} It suffices to show that $o_L(q)$ is odd. Now, as  $q^m \equiv 1 
 \pmod{n}$ and $L\mid n$, the equation, $q^m \equiv 1 
 \pmod{L}$ holds. This implies that $o_L(q)\mid m$. Since $m$ is odd, $o_L(q)$ is odd.
    
\end{proof}

\begin{example}
    Let $q=2, m=9,$  $n=q^{m}-1=2^{9}-1=511,$
and $L=73$. Then $M=\frac{n}{L}=\frac{511}{73}=7$.

Let \(\gamma\) be a primitive \(511\)-st root of unity in the extension field \(\mathbb{F}_{2^9}\) of \(\mathbb{F}_2\). Take \(t=1\), and define
\[
S_1=\{\;73j+1 \pmod{511}\;:\;0\le j\le 6\;\}=\{1,74,147,220,293,366,439\}.
\]
We obtain
\[
\mathcal{O}=\{2^{s}\cdot 1 \pmod{73} : s\ge0\}
=\{1,2,4,8,16,32,64,55,37\}.
\]
The set $\mathcal{O}$ has cardinality $9$ since \(o_{73}(2)=9\).

Let
\[
H=\bigcup_{a \in S_1} C_a=\bigcup_{t'\in\mathcal{O}} S_{t'}
=\bigcup_{t'\in\mathcal{O}}\{\;73j+t'\pmod{511}\;:\;0\le j\le6\;\}.
\]
Define the polynomial
\[
g(x)=\prod_{i\in H} (x-\gamma^i)\in\mathbb{F}_2[x].
\]
The set \(H\) has \(|\mathcal{O}|\cdot M=9\cdot7=63\) elements, so \(\deg g(x)=63\). Therefore the cyclic code \(\mathcal{C}\) with the generator polynomial \(g(x)\) is an \([511,511-63]=[511,448]\) dual-containing LRC over \(\mathbb{F}_2\) with locality $r=L-1=72$, and minimum distance 3. The corresponding quantum code has parameters $[[511,385, 3]]_{q=2}$ with locality $r=72$.

\end{example}

\begin{example}
Let
$q=3, m=5, n=q^{m}-1=3^{5}-1=242,$
and $L=11$. Then $M=\frac{n}{L}=\frac{242}{11}=22.$

Let \(\gamma\) be a primitive \(242\)-nd root of unity in an extension of \(\mathbb{F}_3\). Take \(t=1\), and define
\[
S_1=\{\;11j+1 \pmod{242}\;:\;0\le j\le 21\;\}.
\]
We compute \[
\mathcal{O}=\{3^{s}\cdot 1 \pmod{11} : s\ge0\}
=\{1,3,9,5,4\}.
\]
Thus, we have \(o_{11}(3)=5\).
Let
\[
H=\bigcup_{t'\in\mathcal{O}} S_{t'}
=\bigcup_{t'\in\{1,3,9,5,4\}}\{\;11j+t'\pmod{242}\;:\;0\le j\le21\;\}.
\]
Define the polynomial
\[
g(x)=\prod_{i\in H} (x-\gamma^i)\in\mathbb{F}_3[x].
\]
The set \(H\) has \(|\mathcal{O}|\cdot M=5\cdot22=110\) elements, so \(\deg g(x)=110\). Therefore the cyclic code \(\mathcal{C}\) with the generator polynomial \(g(x)\) is a \([242,132,5]\) dual-containing locally recoverable code over \(\mathbb{F}_3\) with locality $r=L-1=10$. The corresponding qLRC has parameters $[[242,22,5]]_{q=3}$ with locality $10$.

\end{example}

In the examples presented above, we notice that it is possible to construct such sets for a specific value of \( t \). However, when $|T| \ge2$, the problem becomes complicated, as it is challenging to find numbers that satisfy multiple conditions simultaneously. A simpler way to find multiple continuous segments for each multiple of \( L \), while maintaining the property that \(H\cap(-H)=\varnothing\), is given in the following theorem. Here, we assume \( q \equiv 1 \pmod L\). The authors in \cite{FANG2020101650} use a similar condition on $q$ to construct LRCs with unbounded length.

\begin{theorem}\label{T6}
Let \(n = q^{m} - 1\) for some integer \(m > 0\) and prime power \(q\).  
Let \(L \ge 2\) be a divisor of \(n\), and suppose that \(q \equiv 1 \pmod L\).  
Let \(M = n/L\) and \(T \subseteq \{1,2,\dots,\lceil L/2 \rceil - 1\}\).  
Define
\[
S = \{\, jL + t \pmod n \;:\; 0 \le j \le M-1,\; t \in T \}.
\]
For each \(a \in S\), let \(C_a\) denote the \(q\)-cyclotomic coset modulo \(n\) containing \(a\), and let
\[
H = \bigcup_{a \in S} C_a.
\]
Then \(H = S\), and consequently \(|H| = M|T| = \dfrac{n}{L}\,|T|\). Also, $H$ is inverse free, i.e. $H\cap(-H)=\varnothing$.
\end{theorem}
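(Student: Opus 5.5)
The plan is to prove the two claims separately. Closure ($H=S$) is a direct computation using $q\equiv 1 \pmod L$. Inverse-freeness then follows from Theorem~\ref{T1}, because the orbit set $\mathcal O$ collapses to $T$ itself.

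\emph{Step 1: $H=S$.} Since $S\subseteq H$ trivially, it suffices to show that $S$ is closed under multiplication by $q$ modulo $n$. Take $a=jL+t\in S$ and write $q=1+cL$. Then
\[
qa = (1+cL)(jL+t) = jL + t + cL(jL+t) = \bigl(j + c(jL+t)\bigr)L + t .
\]
Because $L\mid n$, reducing modulo $n$ gives $qa\equiv j'L+t \pmod n$, where $j'\equiv j+c(jL+t)\pmod M$ and $0\le j'\le M-1$. Hence $qa\in S$. By induction on $s$, $q^s a\in S$ for every $s\ge0$, so $C_a\subseteq S$ and therefore $H=S$.

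\emph{Step 2: the cardinality.} We need the $M|T|$ residues $jL+t$ to be distinct modulo $n$. Every element of $T$ satisfies $t\le \lceil L/2\rceil-1<L$, so each such residue lies in $[0,n-1]$. Two residues $jL+t$ and $j'L+t'$ that agree modulo $n$ must agree modulo $L$, which forces $t=t'$ and then $j=j'$. This gives $|H|=M|T|$.

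\emph{Step 3: $H\cap(-H)=\varnothing$.} We apply Theorem~\ref{T1}. Since $q^s\equiv1\pmod L$, we get $\mathcal O=\{q^st \bmod L\}=T$, so we must show that $T\cap\{-t \bmod L: t\in T\}=\varnothing$. Suppose instead that $t\equiv -t'\pmod L$ with $t,t'\in T$. Then $L\mid t+t'$ with $2\le t+t'$, and
\[
t+t'\le 2\lceil L/2\rceil-2\le L-1 .
\]
The last inequality holds for both parities of $L$: when $L$ is even, $2\lceil L/2\rceil-2=L-2$, and when $L$ is odd, $2\lceil L/2\rceil-2=L-1$. So $t+t'$ cannot be a multiple of $L$, a contradiction. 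Theorem~\ref{T1} then yields $H\cap(-H)=\varnothing$.

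The only delicate point is this bound on $t+t'$, where the parity of $L$ must be handled; the rest is routine. Alternatively, Step 3 can be done directly without Theorem~\ref{T1}: because $H=S$, if $x\in S$ and $-x\in S$, then reducing modulo $L$ gives exactly the same contradiction.
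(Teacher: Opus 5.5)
Your proof is correct and follows essentially the same route as the paper: you show that $S$ is closed under multiplication by $q$ using $q\equiv 1 \pmod L$ (you go one step at a time via $q=1+cL$ and induct, whereas the paper reduces $q^i a$ modulo $L$ directly), and then you apply Theorem~\ref{T1} with $\mathcal{O}=T$ and observe that $t+t'$ (equivalently $L-t$) cannot land back in $T$. Your explicit check that the $M|T|$ residues $jL+t$ are distinct modulo $n$ fills in a detail the paper leaves implicit.
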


\begin{proof}
Let \(a = jL + t \in S\) with \(0 \le j \le M-1\) and \(t \in T\).  
For any integer \(i \ge 0\), consider
\[
x \equiv q^i a \pmod n,
\qquad 0 \le x \le n-1.
\]
Since \(q \equiv 1 \pmod L\), we have \(q^i \equiv 1 \pmod L\).  
Hence
\[
x \equiv q^i(jL + t) \equiv jL + t \equiv t \pmod L.
\]
Therefore \(x - t\) is divisible by \(L\). Write \(x - t = Lk\) for some integer \(k\).

Because \(0 \le x \le n-1 = ML - 1\) and \(0 \le t \le L - 1\), we get
\[
0 \le x - t < ML.
\]
Then \(k = \tfrac{x - t}{L}\) satisfies \(0 \le k \le M - 1\).  
It follows that
\[
x = kL + t, \qquad 0 \le k \le M - 1,\ t \in T,
\]
and hence \(x \in S\).

Thus, every element of the \(q\)-cyclotomic coset \(C_a\) lies in \(S\), i.e. \(C_a \subseteq S\).  
The reverse inclusion \(S \subseteq H\) is trivial from the definition of \(H\).  
Therefore \(H = S\), and the cardinality of $H$ is
\[
|H| = |S| = M|T| = \frac{n}{L}\,|T|.
\]
Finally we show \(H\cap(-H)=\varnothing\).  By Theorem~\ref{T1} we have the equivalence
\[
H\cap(-H)=\varnothing
\quad\Longleftrightarrow\quad
\mathcal O\cap\{-t\pmod{L}:t\in T\}=\varnothing,
\]
where \(\mathcal O=\{q^s t\pmod{L} : t\in T,\ s\ge0\}\).  Under the hypothesis \(q\equiv1\pmod L\)
the orbit is trivial: for every \(t\in T\) and every \(s\ge0\) we have \(q^s t\equiv t\pmod L\),
so \(\mathcal O=T\). Hence, the condition $
\mathcal O \cap \{-t \pmod{L} : t\in T\} = \varnothing$ of Theorem~\ref{T1} reduces to
\[
T \cap \{-t\pmod{L} : t\in T\} = \varnothing.
\]
Now, \(T\subseteq\{1,\dots,\lceil L/2\rceil-1\}\) ensures that for every
\(t\in T\) the residue \(-t\pmod{L} = L-t\) lies in \(\{\lfloor L/2\rfloor+1,\dots,L-1\}\),
which is disjoint from \(\{1,\dots,\lceil L/2\rceil-1\}\). Then it follows that 
\(T \cap \{-t\pmod{L} : t\in T\}=\varnothing\), and hence by Theorem~\ref{T1} we conclude that
\(H\cap(-H)=\varnothing\).
This completes the proof.
\end{proof}

Using the above theorem, we present a construction of a dual-containing $(r,\delta)$-LRC.

\begin{construction}\label{c3}
Let \(n = q^m - 1\) for some positive integer \(m\) and prime power $q$, let \(L=r+\delta-1\ge2\) be a divisor of \(n\) such that $q \equiv 1 \pmod L$, and $2\le \delta \le \lceil L/2\rceil $. Define a set 
\[
R = \{\, jL + t \pmod n \;:\; 0 \le j \le M-1,\; 1 \le t \le \delta-1 \,\}.
\]
Let \( \gamma \) be a primitive \( n \)-th root of unity in the field \( \mathbb{F}_{q^m} \). Define a polynomial 
\[
g(x) = \prod_{i \in R} (x - \gamma^i).
\]
Let $\mathcal{C}$ be a cyclic code with generator polynomial $g(x)$.
\end{construction}

\begin{theorem} \label{thm6.3}
      The linear code $\mathcal{C}$, given in Construction \ref{c3}, is an optimal $[n,Mr,\delta]$ dual-containing $(r,\delta)$-LRC over $\mathbb{F}_q$. Equivalently, there exist an $(r,\delta)$-qLRC with parameters $[[M(r+\delta-1),M(r-\delta+1), \ge \delta]]_q$. 
\end{theorem}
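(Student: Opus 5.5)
The plan is to apply Theorem \ref{T6} with $T=\{1,2,\dots,\delta-1\}$. Since $2\le\delta\le\lceil L/2\rceil$, we have $T\subseteq\{1,\dots,\lceil L/2\rceil-1\}$, so the hypotheses of Theorem \ref{T6} hold. Three consequences follow. First, the set $R$ of Construction \ref{c3} equals $S=H$. Second, $R$ is a union of $q$-cyclotomic cosets, so $g(x)\in\mathbb{F}_q[x]$ and $\mathcal{C}$ is a genuine cyclic code over $\mathbb{F}_q$ with defining set exactly $R$. Third, $R\cap(-R)=\varnothing$, so Theorem \ref{DC} shows $\mathcal{C}$ is dual-containing.

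For the dimension, $|R|=M(\delta-1)$, so $k=n-M(\delta-1)=M(L-\delta+1)=Mr$.

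For locality, $R$ contains the set $\mathcal{A}$ of Theorem \ref{setA} with $l_i=i$, an arithmetic progression of $\delta-1$ terms with common difference $b=1$ coprime to $n$. One caveat: Theorem \ref{setA} as quoted assumes $\delta\le r$, whereas here $\delta\le\lceil L/2\rceil$ gives $\delta-1\le r$ (roughly). The boundary case may need a direct argument. In that argument, the local sets are the cosets $\{i, i+M, \dots\}$ of the subgroup of index $M$. The restricted code is a cyclic code of length $L$ whose zeros include $\delta-1$ consecutive $L$-th roots of unity, so its distance is at least $\delta$ by the BCH bound.

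For the distance, $R\supseteq\{1,\dots,\delta-1\}$, which is $\delta-1$ consecutive exponents, so Definition \ref{BCH} gives $d\ge\delta$. For the reverse inequality, plug into bound (\ref{d_LRC}). Since $\lceil k/r\rceil=M$, we get
\[
d\le n-k+1-(M-1)(\delta-1)=M(\delta-1)+1-(M-1)(\delta-1)=\delta .
\]
Hence $d=\delta$ and the code is distance-optimal.

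For the quantum part, $\mathcal{C}^\perp\subseteq\mathcal{C}$ implies $d(\mathcal{C}^\perp)\ge d(\mathcal{C})=\delta$, so the hypothesis of Theorem \ref{CqLRC} holds. That theorem then yields an $(r,\delta)$-qLRC with $2k-n=2Mr-M(r+\delta-1)=M(r-\delta+1)$ and $d_Q\ge\delta$.

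The main obstacle is only bookkeeping. One must confirm that $R$ is closed under multiplication by $q$, which is the $H=S$ part of Theorem \ref{T6}. One must also check the applicability range of Theorem \ref{setA} against the constraint $\delta\le\lceil L/2\rceil$; if that check fails, fall back on the direct BCH argument on the punctured local codes described above.
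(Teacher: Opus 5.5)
Your proposal is correct and follows the paper's own proof: Theorem \ref{T6} gives closure under $q$-cyclotomic cosets and $R\cap(-R)=\varnothing$, Theorem \ref{setA} gives locality, the BCH bound together with the Singleton-type bound gives $d=\delta$, and Theorem \ref{CqLRC} gives the quantum code. You also spell out the dimension and distance bookkeeping that the paper only cites. Your caveat about Theorem \ref{setA} is unnecessary, since $\delta\le\lceil L/2\rceil\le (L+1)/2=(r+\delta)/2$ already forces $\delta\le r$, so that theorem applies directly and the fallback punctured-code argument is not needed.
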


\begin{proof}
It is clear from Theorem \ref{T6} that the set $R$ is closed with respect to $q$-cycotomic cosets modulo $n$, and hence $g(x) \in \mathbb{F}_q$. This implies that the code $\mathcal{C}$ is over $\mathbb{F}_q$. This also implies that the defining set for code $\mathcal{C}$ can be directly obtained from $R$ itself. Therefore, by Theorem \ref{setA}, $\mathcal{C}$ is an $(r,\delta)$-LRC.

The optimality of $\mathcal{C}$ can be proved in a similar manner as in Theorem \ref{6.Thm}.  Also, note that the set $R$ is contained in the set $H$ given in Theorem \ref{T6}. Moreover, we have $\delta \le \lceil \frac{L}{2}\rceil$. Combining all together, and by Theorem \ref{T6}, we conclude that $\mathcal{C}$ is a dual-containing $(r,\delta)$-LRC over $\mathbb{F}_q$. The parameters of the corresponding qLRC can be obtained from Theorem \ref{CqLRC}.
\end{proof}

\begin{example}
   Let $q=11,  m=2, n=q^{m}-1=11^{2}-1=120$, and
\(L=r+\delta-1=10\) such that $\delta \le \lceil L/2\rceil=5$. Then we get $M=\frac{n}{L}=\frac{120}{10}=12.$
For this example, we fix $\delta=4$ and thus we get $r=7$.
Note that we have \(q=11\equiv1\pmod{L=10}\).
The set $R$ is given as 
\[
R = \{\, 10j + t \pmod n \;:\; 0 \le j \le 11,\; 1 \le t \le 3 \,\}.
\]
The $11$-cyclotomic coset of any element $a \in R$ will contain at most two elements, and is given by $C_a=\{a, 11a \pmod {120} \}$. Moreover, since \(q\equiv1\pmod L\), we get \(H=R\) and
$|H|=|R|=12\cdot 3 = 36.$

Now, let $\gamma$ be a primitive $120^{th}$ primitive root of unity in $\mathbb{F}_{121}$, and let $\mathcal{C}$ be a cyclic code with the generator polynomial having roots $\gamma^i: i \in R$. Then $\mathcal{C}$ is a $[120,84, 4 ]$ dual-containing $(r=7, \delta=4)$-LRC over $\mathbb{F}_{11}$. The corresponding quantum code is a distance-optimal pure $(r=7, \delta=4)$-qLRC with parameters $[[120,48,4]]_{q=11}$.

\end{example}

\section{Conclusion}
In this work, we have studied some constructions of qLRCs from classical cyclic $(r,\delta)$-LRCs. By determining which defining sets of cyclic LRCs yield dual-containing conditions, we developed a method for constructing CSS-type qLRCs. We presented three explicit families of $(r,\delta)$-qLRCs, two of which attain the quantum Singleton-like bound whenever the codes are pure. Moreover, Constructions~\ref{c2} and~\ref{c3} admit code lengths that are unbounded with respect to the field size, which is desirable for practical applications. 
As a direction for future work, our methods based on Theorems~\ref{T2} and~\ref{T6} primarily address the cases where \(\delta = 2\), and \(\delta \leq \lceil\frac{L}{2} \rceil\) with \(q \equiv 1 \pmod{L}\), respectively. However, extending these constructions to cover arbitrary values of \(\delta\) without the condition \(q \equiv 1 \pmod{L}\), for unbounded lengths with respect to field sizes, remains an open and challenging problem. In particular, finding examples with \(|T| \geq 2\) for Theorem~\ref{T1} presents an interesting challenge.

\section*{Acknowledgment}
The first author would like to thank the University Grants Commission, India, for financial support. 





\bibliography{sn-bibliography}
\end{document}